\documentclass[11pt]{article}

\usepackage[margin=1in]{geometry}
\usepackage[T1]{fontenc}
\usepackage[utf8]{inputenc}

\usepackage{amsmath,amssymb,amsthm,mathtools}

\usepackage[numbers]{natbib}
\usepackage[hidelinks]{hyperref}

\newtheorem{theorem}{Theorem}
\newtheorem{corollary}{Corollary}
\newtheorem{definition}{Definition}
\newtheorem{remark}{Remark}

\DeclareMathOperator{\TV}{TV}

\DeclareMathOperator{\Prb}{Pr}
\DeclareMathOperator{\E}{\mathbb{E}}

\newcommand{\SOTM}{\mathcal{M}}
\newcommand{\Oracle}{\mathcal{O}}
\newcommand{\Dist}{\mathcal{D}}

\newcommand{\TOK}{\mathrm{TOK}}
\newcommand{\tok}{\mathrm{tok}}
\newcommand{\kappaT}{\kappa_T}

\title{\textbf{Computing with Stochastic Oracles in AI-Augmented Computation}%
}
\author{Jie Wang\,\thanks{Richard A. Miner School of Computing and Information Sciences,
University of Massachusetts, Lowell, MA 01854, USA.}\\[0.5ex]
\small\texttt{Jie\_Wang@uml.edu}}
\date{}

\begin{document}
\maketitle

\begin{abstract}
The Stochastic-Oracle Turing Machine (SOTM) framework models AI-augmented
computation as the interaction of a probabilistic Turing machine with an oracle
whose responses are drawn from context-dependent distributions. This paper
studies what an SOTM can achieve under two oracle-response schemes: in a
cached-response oracle, each distinct query receives one response that is reused
on later calls to the same query, while in a fresh-response oracle, each call
returns an independent response. In both schemes, the SOTM first computes from
its input and internal random source to generate its first query, then proceeds
adaptively, computing from its query-response transcript (the record of queries
issued and responses received) to generate each subsequent query or produce a
final output. Cached responses impose two transcript-based ceilings on achievable
performance: a correct-identification ceiling governed by the total variation
distance between the transcript distributions induced by the hidden states of the
oracle, and an output quality ceiling equal to the expected score of the best
output the SOTM can compute from the transcript. Fresh responses can raise
these ceilings by allowing repeated calls to accumulate independent evidence
toward correct or high-quality outputs. In the binary single-informative-query
case, the error probability decreases exponentially in the number of calls to the
same query at the Chernoff rate. For output quality, query-count bounds characterize threshold stopping when the
score function is incorporated as part of the SOTM, and majority-based
amplification bounds characterize the binary candidate-output model when it is
not. Together, the results identify how
response reuse, transcript information, and access to the score function
determine what an SOTM can compute and at what token cost.
\end{abstract}

\section{Introduction}

AI-augmented computing delegates knowledge-intensive or skill-intensive subtasks
to stochastic systems. A program may ask a general-purpose large language model
(LLM), a domain-specific fine-tuned model, or more generally a family of AI
systems with a built-in control mechanism, to solve a problem instance, produce
code, or judge a candidate solution, among other tasks.
Wang~\cite{Wang2026} introduced the Stochastic-Oracle Turing Machine (SOTM)
framework\footnote{Wang originally called the model the AI-Oracle Turing Machine
(AOTM)~\cite{Wang2026}, and later renamed it the Stochastic-Oracle Turing
Machine~\cite{Wang2026cert}. The term ``AI'' is context-dependent, while
``stochastic'' names the mathematically defined feature of the oracle.} to
formalize this paradigm, in which a probabilistic Turing machine interacts with
a stochastic oracle through a query-response interface, with responses drawn from
query-dependent distributions. Wang also introduced token complexity as an exact measure of the token cost for
solving a given task at a given quality level: the minimum expected weighted sum
of query and response tokens, with query-token weight $\alpha>0$ and
response-token weight $\beta>0$. In what follows, ``stochastic oracle'' is abbreviated as
``oracle.''

In practice, a system engineer may choose an LLM with strong benchmark
performance as an oracle and build a system to interact with it for completing a
given task. Benchmark performance alone, however, may not reveal the response
distributions that matter on the task instances at hand. Even if those
distributions are known or approximately modeled, one may still want to know what
an adaptive SOTM can achieve from interacting with the chosen oracle to inform
the design of the system. This paper studies that question under two
oracle-response schemes. Under cached responses, each distinct query receives one
response that is reused on later calls to the same query. Under fresh responses,
repeated calls to the same query return independent responses drawn from the
oracle.

The distinction matters because LLM responses may be wrong, misleading, or
ambiguous at times---the hallucination problem inherent in stochastic systems.
The SOTM computes from the responses it receives, but it may not have access to
the score function to evaluate whether its final output is correct or high
quality. Two issues arise. The first is identification accuracy: the
transcript of these query-response interactions (the record of queries issued
and responses received) may not let the SOTM distinguish which hidden state of
the oracle generated the responses. The second is output quality: even when no
hidden state needs to be identified, the transcript may not contain enough useful
information to compute a high-scoring output. Cached responses make both
issues more acute, since repeating a query returns the same response and adds
no new
information.

The cached-response results express both issues as transcript-based ceilings.
For binary identification, each hidden state induces a distribution on the finite
transcripts produced by the SOTM. The optimal success probability is governed by
the total variation distance between these transcript distributions; their
overlap is exactly the irreducible ambiguity that no adaptive strategy can
remove. This result extends to finite hidden-state families, and the
single-informative-query case gives a closed-form expression in terms of the
total variation distance between the two state-conditional response distributions
of that query.

For output quality, the question is not whether the SOTM identifies a hidden
state, but what quality it can compute from the transcript. The output
quality ceiling is the expected score of the best output computable from the
transcript. Unlike the identification issue, which concerns ambiguity about
the hidden state, the output quality issue can arise even when the hidden
state is fully determined: an SOTM may exhaust all information in its transcript
and still fail to produce a high-scoring output.

Fresh responses raise both ceilings by allowing repeated oracle calls to add
independent evidence about the hidden state. For identification, the
identification error in the binary single-informative-query setting decreases
exponentially in the number of queries, at a rate given by the Chernoff
information between the two response distributions. For output quality, fresh
responses convert the ceiling into a query-count problem. When the score function
is incorporated into the SOTM, the SOTM can keep querying until it finds an
output whose score reaches the target level. When the score function is not
available during computation, fresh responses can still help in a binary
candidate-output model through majority voting over repeated responses.

The results are distributional: they characterize what is achievable given the
response distributions of the oracle, not just its average behavior. Total
variation distance measures transcript distinguishability under cached responses;
expected score measures the quality that the transcript can support;
Chernoff information measures the asymptotic rate at which independent fresh
responses reduce binary identification error; and Bernoulli KL divergence
appears in the majority-based query-count bound in the binary candidate-output
model.

The paper is organized as follows. Section~\ref{sec:preliminaries} recalls the
SOTM model and the information-theoretic quantities used in the paper.
Section~\ref{sec:hidden} introduces the hidden-state abstraction and the
transcript framework. Sections~\ref{sec:identification} and~\ref{sec:coverage}
establish the cached-response identification and output quality ceilings.
Section~\ref{sec:fresh} studies fresh responses, including identification via
evidence accumulation and query-count bounds for output quality.
Section~\ref{sec:conclusion} concludes and lists extensions.

\section{Preliminaries}
\label{sec:preliminaries}

We build on the SOTM model and token-complexity terminology of
Wang~\cite{Wang2026}, generalizing the oracle's response distributions from
query-dependent to context-dependent as described below. A fixed finite
alphabet $\Sigma$ is understood, and all queries and responses are strings in
$\Sigma^\ast$.

\begin{definition}[Stochastic-Oracle Turing Machine (SOTM)]
\label{def:sotm}
A \textbf{Stochastic-Oracle Turing Machine} is a pair $\SOTM=(M,\Oracle)$,
where $M$ is a probabilistic Turing machine and $\Oracle$ is a stochastic
oracle. A computation of $\SOTM$ on an input $x\in\Sigma^\ast$ proceeds in
turns. Each turn $i$ consists of two phases: first, the \textbf{inter-turn
computation} phase, in which $M$ reads $x$ and the prior transcript $t_{i-1}$,
uses its internal state and random source, and either generates the next query
$q_i\in\Sigma^\ast$ or halts and writes an output $y\in Y$ to an output tape;
and second, the oracle exchange phase, in which $\Oracle$ draws a response $r_i$
from $\Dist_{q_i}^{x,t_{i-1}}$, the oracle's response distribution for query
$q_i$ in context $(x,t_{i-1})$, and returns it to $M$ (this phase is skipped if
$M$ halted in phase 1). Here $t_{i-1}=(q_1,r_1,\ldots,q_{i-1},r_{i-1})$ is the
realized prior transcript (with $t_0$ the empty transcript), where each pair
$(q_j,r_j)$ is the query and response from turn $j$. To distinguish random
variables from their realizations, we write $Q_j$ and $R_j$ for the random
query and response on turn $j$, and
$\mathcal{T}_{i-1}=(Q_1,R_1,\ldots,Q_{i-1},R_{i-1})$ for the random transcript
before turn $i$. The oracle is the family of context-dependent response
distributions
\[
  \Oracle = \bigl\{\Dist_q^{x,t} : q\in\Sigma^\ast,\, x\in\Sigma^\ast,\,
  t=(q_1,r_1,\ldots,q_{k},r_{k})\text{ a finite realized transcript}\bigr\}.
\]
The oracle's response distributions $\Dist_q^{x,t}$ encode knowledge or
capabilities not available to $M$ as a stand-alone probabilistic Turing machine:
$M$ cannot produce responses with the same distribution from $x$ and $t$ alone.
\end{definition}

\begin{remark}[Extension from query-dependent to context-dependent distributions]
Wang~\cite{Wang2026} originally defined the oracle's response distribution as
depending only on the query $q$, giving a family $\{\Dist_q:q\in\Sigma^\ast\}$.
This paper generalizes to context-dependent distributions
$\Dist_q^{x,t_{i-1}}$, where the distribution also depends on the input $x$ and
the realized prior transcript $t_{i-1}$. This extension is motivated by modern
LLM deployments, in which the same query posed after different prior exchanges,
or about different inputs, may activate different internal mechanisms---such as
different expert combinations in a mixture-of-experts
architecture~\cite{Shazeer2017,Fedus2022}---and
therefore produce responses from different distributions. The query-dependent
model of Wang~\cite{Wang2026} is the special case in which
$\Dist_q^{x,t_{i-1}}=\Dist_q$ for all $x$ and $t_{i-1}$.
\end{remark}

Throughout this paper we consider only SOTMs $\mathcal{M}=(M,\mathcal{O})$ such
that for every input $x\in\Sigma^\ast$, the computation halts with probability
one and produces a finite query-response transcript.

\begin{definition}[Task and Quality]
A task is $T=(X,Y,S,\Dist_X)$, where $X\subseteq\Sigma^\ast$ is an input space,
$Y$ is an output space, $S:X\times Y\to[0,1]$ is a score function, and $\Dist_X$
is a distribution on $X$. An SOTM $\SOTM=(M,\Oracle)$ \textbf{achieves
quality~$\theta$} on task $T$ if
\[
  \E_{x\sim\Dist_X}\bigl[S(x,\SOTM(x))\bigr] \;\geq\; \theta,
\]
where the expectation is over $x\sim\Dist_X$, the internal randomness of $M$,
and the response distributions of $\Oracle$, and $\SOTM(x)\in Y$ is the output
$M$ produces on input $x$.
\end{definition}

\begin{remark}[Availability of the score function]
Whether the score function $S$ is available to the SOTM during its computation
depends on the setting. In everyday LLM use---such as chatbots, search, or
summarization---there is typically no formal score function available during
computation; quality is judged informally by the user after the fact. In
specific vertical applications---such as code generation evaluated by test
suites, medical diagnosis with ground-truth labels, or legal document review
with compliance criteria---an explicit score function can be incorporated into
the SOTM, allowing it to evaluate and act on candidate outputs during its
computation. Both cases are studied in this paper; each result states explicitly
which applies.
\end{remark}

\begin{definition}[Token Counts, Token Cost, and Token Complexity]
For a computation of $\mathcal{M}$ on input $x$, let $N(x)$ denote the number of
oracle turns, and let $\tok_{\mathcal{M},\mathrm{in},i}(x)$ and
$\tok_{\mathcal{M},\mathrm{out},i}(x)$ denote the query-token and response-token
counts on turn $i$. The \textbf{token count} on turn $i$ is
\[
  \tok_{\mathcal{M},i}(x)
  =
  \tok_{\mathcal{M},\mathrm{in},i}(x)
  +
  \tok_{\mathcal{M},\mathrm{out},i}(x).
\]
Given a query-token weight $\alpha>0$ and a response-token weight $\beta>0$,
the \textbf{token cost} on turn $i$ is
\[
  \tok_{\mathcal{M},i}^{(\alpha,\beta)}(x)
  =
  \alpha\,\tok_{\mathcal{M},\mathrm{in},i}(x)
  +
  \beta\,\tok_{\mathcal{M},\mathrm{out},i}(x).
\]
The total token cost on input $x$ is
\[
  \TOK_{\mathcal{M}}^{(\alpha,\beta)}(x)
  =
  \sum_{i=1}^{N(x)} \tok_{\mathcal{M},i}^{(\alpha,\beta)}(x),
\]
and the expected token cost is
\[
  \TOK_{\mathcal{M}}^{(\alpha,\beta)}
  =
  \E_{x\sim\Dist_X}\!\left[\TOK_{\mathcal{M}}^{(\alpha,\beta)}(x)\right],
\]
where the expectation is over the input distribution, the internal randomness of
$M$, and the response distributions of $\Oracle$. The \textbf{token complexity}
of task $T$, oracle $\Oracle$, quality target $\theta$, and weights
$\alpha,\beta>0$ is
\[
  \kappaT(\theta;\Oracle,\alpha,\beta)
  =
  \inf_{\mathcal{M}}\;\TOK_{\mathcal{M}}^{(\alpha,\beta)},
\]
where the infimum is over all SOTMs $\mathcal{M}=(M,\Oracle)$ that achieve
quality~$\theta$ on $T$.
\end{definition}

\begin{remark}[Notational change from Wang~\protect\cite{Wang2026}]
Wang~\cite{Wang2026} used $\tok_{\mathcal{M},Q,i}$ and $\tok_{\mathcal{M},R,i}$
to denote query-token and response-token counts, with $Q$ and $R$ as subscripts.
This paper introduces random variables $Q_i$ and $R_i$ for the query and
response on turn $i$, creating a notational conflict with those subscripts. We
therefore separate token counts $\tok_{\mathcal{M},i}$ from token costs
$\tok_{\mathcal{M},i}^{(\alpha,\beta)}$, using $\mathrm{in}$ and $\mathrm{out}$
subscripts to distinguish query and response token counts. The token complexity
$\kappaT(\theta;\Oracle,\alpha,\beta)$ is otherwise defined identically to
Wang~\cite{Wang2026}.
\end{remark}

Token cost and output quality are competing quantities: more oracle turns consume
more tokens but may yield higher-quality outputs, while fewer turns reduce token
cost but may constrain achievable quality. The \textbf{complexity frontier} for
a task $T$ and oracle $\Oracle$ is the set of achievable (token cost, quality)
pairs
\[
  \mathcal{F}(T,\Oracle,\alpha,\beta)
  =
  \bigl\{
    \bigl(\TOK_{\mathcal{M}}^{(\alpha,\beta)},\,\theta\bigr)
    :
    \mathcal{M}\text{ achieves quality~}\theta\text{ on }T
  \bigr\},
\]
where the pair traces out a curve as $\theta$ varies. The token complexity
$\kappaT(\theta;\Oracle,\alpha,\beta)$ is the minimum token cost on this
frontier at quality level $\theta$, as introduced in Wang~\cite{Wang2026}. The
results of this paper characterize parts of this frontier: the cached-response
ceilings (Sections~\ref{sec:identification} and~\ref{sec:coverage}) give the
maximum quality achievable at the lowest token cost under cached responses, while
the fresh-response query-count bounds (Section~\ref{sec:fresh}) characterize how
token cost grows as the quality target $\theta$ increases.

\begin{definition}[Cached-Response and Fresh-Response Oracles]
A \emph{cached-response} oracle, on the first call to query $q$ at turn $i$,
draws a response from $\Dist_q^{x,t_{i-1}}$ and returns that same response on
every later call to $q$ at any turn $j>i$. A \emph{fresh-response} oracle draws
an independent response from $\Dist_q^{x,t_{i-1}}$ on every call to $q$ at
turn $i$; since later calls to the same query occur at different turns with
different prior transcripts, they draw from potentially different distributions.
\end{definition}

\begin{remark}[Query counts under the two schemes]
Query counts matter under both schemes, since they determine token costs
directly. Under cached responses, the query count is determined by the algorithm
for solving the task: each distinct query is issued at most once, since repeating
a query returns the same cached response and adds no new information. The token
cost analysis for cached responses therefore focuses on what quality the
transcript can support, rather than on the number of queries. Under fresh
responses, repeating the same query does add information, since each call returns
an independent response. This makes query count a meaningful quantity:
Section~\ref{sec:fresh} characterizes how many fresh queries are needed to reach
a target quality level, showing that repeated calls allow the SOTM to accumulate
evidence and produce better outputs than a single cached response permits.
\end{remark}

\begin{definition}[Output Functions, Per-Input Quality, and Competence]
\label{def:competence}
Given an input $x\in X$ and a realized transcript
$t=(q_1,r_1,\ldots,q_k,r_k)$ as defined in Definition~\ref{def:sotm}, the
remaining inter-turn computation of $\SOTM$ can be represented by an
\textbf{output function} $\phi:X\times(\Sigma^\ast)^\ast\to Y$ that produces
output $y=\phi(x,t)\in Y$. The \textbf{per-input expected quality} of $\phi$
under oracle $\Oracle$ on input $x$ is $\E[S(x,\phi(x,\mathcal{T}))]$, the
expectation over the random transcript $\mathcal{T}$ produced by the
computation. For a $0/1$ score function and a single-turn computation, the
\textbf{per-input competence} is
\[
  q(x)=\Prb[S(x,R_1)=1],
\]
where $R_1$ is one fresh oracle response.
\end{definition}

For distributions $\mu$ and $\nu$ on a countable set, their total variation
distance is
\[
  \TV(\mu,\nu)=\frac12\sum_r |\mu(r)-\nu(r)|.
\]
For Bernoulli parameters $a,b\in(0,1)$,
\[
  D(a\|b)
  =
  a\ln\frac{a}{b}
  +(1-a)\ln\frac{1-a}{1-b}
\]
is the Bernoulli KL divergence. Total variation governs the optimal success probability in binary identification
from a single observation; KL divergence governs the
exponential error rate in repeated independent testing~\cite{CoverThomas2006}.

\section{Hidden States and Computational Bounds}
\label{sec:hidden}

Because the oracle's response to a query is drawn from a response distribution,
the response need not be correct, so an SOTM cannot simply trust any single
response it receives. Nor can the SOTM check its output against the score
function, since the score function is unavailable during the computation unless
incorporated into the SOTM. It must instead use the pattern of oracle responses
to produce an output that scores well under the task's score function.

Stochastic oracle responses motivate the hidden-state abstraction. A hidden
state is an unobserved condition, determined by the oracle's internal mechanism
and the current context $(x,t_{i-1})$, that determines the active response
distribution for a query. The SOTM does not observe this condition directly; it
only observes the response drawn from the active distribution. Hidden states thus
formalize a distinction between two notions: the distributional behavior of the
oracle, which may be characterized in advance, and the particular latent
condition active in the current run, which must be inferred from the transcript.

In the context of LLMs, the distributional behavior corresponds to what a system
engineer can learn by probing the model before deployment---measuring response
frequencies, error rates, and tendencies across query types. A concrete example
of hidden states arises in mixture-of-experts (MoE)
architectures~\cite{Shazeer2017,Fedus2022}, in which a query is routed by a
gating mechanism to one or more expert sub-networks depending on the input, the
current query, and the prior transcript $t_{i-1}$. The routing decision is
hidden from the caller, who observes only the response. The system engineer may estimate the gating distribution over expert combinations
in advance, but cannot observe which experts fired for any specific
query---the latent condition active in a specific run must be inferred from the
response.

The hidden state in the SOTM framework represents this unobserved routing
condition. Each possible hidden state $b$ is associated with a response
distribution $\Dist_{q_i}^{b}$; the active hidden state determines which of
these distributions the oracle draws from, but the caller observes only the
response, not which state was active. Different expert combinations have
different weight matrices and therefore different response distributions,
including different probabilities of producing useful or correct responses.

On turn $i$, let $b_i$ denote the hidden state active when query $q_i$ is
issued in context $(x,t_{i-1})$. The oracle response $R_i$ is drawn from
$\Dist_{q_i}^{b_i}$, the response distribution under hidden state $b_i$. The
full random transcript is $\mathcal{T}_N=(Q_1,R_1,\ldots,Q_N,R_N)$, where $N=N(x)$
is the random number of turns. The response distributions $\Dist_{q_i}^{b_i}$
are not directly observable, but they can in principle be estimated through
repeated sampling under controlled conditions. We assume the system engineer has
characterized them, whether from calibration, benchmarking, or the task's
construction. What cannot be characterized in advance is which state $b_i$ is
active for a specific instance at hand---this is unknown per instance and must be
inferred from the response. The ceilings in
Sections~\ref{sec:identification} and~\ref{sec:coverage} are consequences of
this per-instance uncertainty; distributional uncertainty---not having
characterized the $\Dist_{q_i}^{b_i}$---is the subject of reliability
certification~\cite{Wang2026cert} and is not the focus of this paper.

In the MoE setting, let $B_i$ denote the random variable for the hidden state
active on turn $i$, taking values in the set of possible expert combinations.
The \textbf{gating distribution} is the conditional distribution of $B_i$ given
the context $(x,t_{i-1},q_i)$, with probabilities
$\pi_{b_i}=\Prb[B_i=b_i\mid x,t_{i-1},q_i]$ for each possible hidden state
$b_i$. Standard routing intends determinism---a fixed expert combination
selected by $(x,t_{i-1})$---but load balancing, numerical precision, and
intentional randomization make stochastic routing the more honest model, with
deterministic routing as the degenerate case where $\pi_{b_i}$ is a point mass
on a single expert combination. When the gating distribution is known, the
marginal response distribution
$\bar\Dist_{q_i} = \sum_{b_i} \pi_{b_i}\Dist_{q_i}^{b_i}$ is fully
characterized.

\section{Identification under Cached Responses}
\label{sec:identification}

This section studies binary identification under a cached-response oracle, in
the case where the score function is not incorporated into the SOTM. The hidden
state $B\in\{0,1\}$ has distribution $\pi_b=\Prb[B=b]$ for $b\in\{0,1\}$,
with $\pi_0+\pi_1=1$: output $1$ is correct under $B=1$, and output $0$ is
correct under $B=0$. The SOTM's output is an estimate $\hat{B}\in\{0,1\}$, and
success is evaluated externally by whether $\hat{B}=B$.

Without loss of generality, we assume the SOTM never reissues a query: if its
computation would repeat a query $q$ already asked, it reuses the stored
response rather than calling the oracle again. Any SOTM that reissues a query
can be replaced by an equivalent SOTM that stores and reuses the first response
internally. This guarantees that each distinct query contributes exactly one
response to the transcript. Write $P^b_{\mathcal{M}}(\cdot)$ for the
probability mass function of the transcript $\mathcal{T}_N$ under $B=b$, so
that $P^b_{\mathcal{M}}(t)$ is the probability that the computation produces
realization $t$.

\begin{theorem}[Identification by an adaptive cached-response SOTM]
\label{thm:single-sample-ceiling}
For a binary identification task under a cached-response oracle with distribution
$(\pi_0,\pi_1)$ on $B$, the supremum success probability over all adaptive
SOTMs is
\[
  \bar{\theta}
  =
  \sup_{\mathcal{M}}
  \left(\frac12 + \TV(\pi_1P^1_{\mathcal{M}},\pi_0P^0_{\mathcal{M}})\right),
\]
where the sum in $\TV$ ranges over all transcript realizations $t$.
For each fixed $\mathcal{M}$, the optimal decision rule attains the
optimal success probability: output $1$ when
$\pi_1P^1_{\mathcal{M}}(t)>\pi_0P^0_{\mathcal{M}}(t)$ and output $0$ otherwise,
with ties broken arbitrarily. This rule depends on $x$ through the transcript
distributions $P^b_{\mathcal{M}}$.
\end{theorem}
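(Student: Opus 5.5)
The plan is to fix an SOTM $\mathcal{M}$, show that for that machine the best achievable success probability is $\frac12+\TV(\pi_1P^1_{\mathcal{M}},\pi_0P^0_{\mathcal{M}})$, and then take the supremum over $\mathcal{M}$.

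\textbf{Step 1: separate query generation from the final decision.} Any adaptive SOTM halts with probability one and produces a finite transcript $\mathcal{T}_N$. Its output is $\hat B=\phi(x,\mathcal{T}_N)$ for some output function $\phi$ as in Definition~\ref{def:competence}. Internal randomness in the final step can be allowed by taking $\phi(x,t)$ to output $1$ with probability $\psi(t)\in[0,1]$. Because $x$ is fixed, the transcript law under $B=b$ is exactly $P^b_{\mathcal{M}}$. The no-reissue convention ensures this law is well defined under caching. Changing only the final decision step leaves the querying behaviour, and hence $P^b_{\mathcal{M}}$, unchanged.

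\textbf{Step 2: compute the success probability and optimise the decision.} For a fixed querying strategy,
\[
\Prb[\hat B=B]=\sum_t\bigl(\psi(t)\,\pi_1P^1_{\mathcal{M}}(t)+(1-\psi(t))\,\pi_0P^0_{\mathcal{M}}(t)\bigr).
\]
The sum runs over countably many finite transcripts, since $\Sigma$ is finite. Write $a_t=\pi_1P^1_{\mathcal{M}}(t)$ and $c_t=\pi_0P^0_{\mathcal{M}}(t)$. Each summand is affine in $\psi(t)$, so it is bounded by $\max(a_t,c_t)$. The bound is attained by the stated rule: $\psi(t)=1$ when $a_t>c_t$, $\psi(t)=0$ otherwise, with any value allowed on ties. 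This gives
\[
\sup_\psi \Prb[\hat B=B]=\sum_t\max(a_t,c_t).
\]

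\textbf{Step 3: convert the maximum into a total variation.} Apply the identity $\max(a,c)=\frac{a+c}{2}+\frac{|a-c|}{2}$ to each term. Using $\sum_t(a_t+c_t)=\pi_1+\pi_0=1$, the sum becomes $\frac12+\frac12\sum_t|a_t-c_t|$, which is $\frac12+\TV(\pi_1P^1_{\mathcal{M}},\pi_0P^0_{\mathcal{M}})$. Here $\TV$ is extended to nonnegative measures by the same formula. This establishes the per-machine optimum and the decision rule.

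\textbf{Step 4: take the supremum.} Every SOTM achieves at most its own per-machine value. Replacing a machine's final decision step by the optimal rule gives another SOTM with the same transcript distributions that attains that value. The supremum over all SOTMs is therefore exactly $\bar\theta$.

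The rule uses $P^b_{\mathcal{M}}$, which depends on $x$ and on the known response distributions. So the rule depends on $x$ only through these distributions, as the statement claims.

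\textbf{Main obstacle.} The delicate point is Step 1: justifying that the decision step can be replaced without altering $P^b_{\mathcal{M}}$. Under caching with the no-reissue convention, the transcript law is determined entirely by the querying policy and the oracle's distributions. The output is computed only after halting and never affects later responses, so the replacement is legitimate. I would state this as a short lemma before Step 2.
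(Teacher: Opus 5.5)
Your proposal is correct and follows essentially the same route as the paper: fix $\mathcal{M}$, optimize the decision transcript by transcript, and convert the pointwise optimum into $\frac12+\TV(\pi_1P^1_{\mathcal{M}},\pi_0P^0_{\mathcal{M}})$ via $\max(a,c)=\frac{a+c}{2}+\frac{|a-c|}{2}$. The paper does the same computation on the error probability using the dual identity $\min\{a,b\}=\frac12(a+b-|a-b|)$. Your allowance for randomized final decisions and your check that swapping the decision step leaves $P^b_{\mathcal{M}}$ unchanged are refinements the paper leaves implicit; the reduction to a rule $\psi(t)$ holds because $M$'s coins are independent of $B$ and affect the oracle only through the transcript, so $\Prb[\hat B=1\mid\mathcal{T}_N=t,B=b]$ does not depend on $b$.
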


\begin{proof}
By the caching construction, each distinct realized query $q_i$ appearing in
$t$ contributes one response, drawn from $\Dist_{q_i}^{b_i}$ on its first and
only call, while repeated queries reuse the stored response. Thus, on the
probability-one halting event, $\mathcal{M}$ produces one finite transcript
realization $t$ determined by $M$'s internal randomness and the hidden states
$b_i$ under $B=b$.

This single-realization property is specific to cached responses: under fresh
responses, repeated calls to the same query return independent draws, and the
analysis would apply to the growing product distribution of independent draws.

Given a realization $t$, any decision rule $\hat{B}(x,t)\in\{0,1\}$ induces a
partition of transcripts for this fixed $x$: $A=\{t:\hat{B}(x,t)=1\}$ and
$A^c=\{t:\hat{B}(x,t)=0\}$; we write $\hat{B}(t)$ when $x$ is clear from
context. It succeeds when $\hat{B}(\mathcal{T})=B$ and errs when
$\hat{B}(\mathcal{T})\neq B$.
With distribution $(\pi_0,\pi_1)$ on $B$, its error probability is
\[
  P_e(\hat{B})
  =
  \pi_0\Prb_{\mathcal{T}\sim P^0_{\mathcal{M}}}[\hat{B}(\mathcal{T})=1]
  +
  \pi_1\Prb_{\mathcal{T}\sim P^1_{\mathcal{M}}}[\hat{B}(\mathcal{T})=0]
  =
  \pi_0\sum_{t\in A}P^0_{\mathcal{M}}(t)
  +
  \pi_1\sum_{t\in A^c}P^1_{\mathcal{M}}(t).
\]
For each transcript $t$, assigning $t$ to $A$ contributes
$\pi_0P^0_{\mathcal{M}}(t)$ to the error, while assigning it to $A^c$
contributes $\pi_1P^1_{\mathcal{M}}(t)$. Minimizing over all decision rules
gives the per-realization minimum
\[
  P_e^\ast(\mathcal{M})
  =
  \sum_t
  \min\{\pi_0P^0_{\mathcal{M}}(t),\,\pi_1P^1_{\mathcal{M}}(t)\},
\]
attained by the optimal decision rule, which places $t\in A$ exactly when
$\pi_1P^1_{\mathcal{M}}(t)>\pi_0P^0_{\mathcal{M}}(t)$ for this fixed $x$.

The optimal success probability for a fixed SOTM $\mathcal{M}$ is
\[
  1-P_e^\ast(\mathcal{M})
  =
  1-\sum_t
  \min\{\pi_0P^0_{\mathcal{M}}(t),\,\pi_1P^1_{\mathcal{M}}(t)\}.
\]
Using $\min\{a,b\}=\frac12(a+b-|a-b|)$ and
$\sum_tP^0_{\mathcal{M}}(t)=\sum_tP^1_{\mathcal{M}}(t)=1$, we obtain
\[
  P_e^\ast(\mathcal{M})
  =
  \frac12\left(
    1-\sum_t
    \left|\pi_1P^1_{\mathcal{M}}(t)-\pi_0P^0_{\mathcal{M}}(t)\right|
  \right).
\]
Since
\[
  \TV(\pi_1P^1_{\mathcal{M}},\pi_0P^0_{\mathcal{M}})
  =
  \frac12\sum_t
  \left|\pi_1P^1_{\mathcal{M}}(t)-\pi_0P^0_{\mathcal{M}}(t)\right|,
\]
the optimal success probability can be written as
\[
  1-P_e^\ast(\mathcal{M})
  =
  \frac12+\TV(\pi_1P^1_{\mathcal{M}},\pi_0P^0_{\mathcal{M}}).
\]
Taking the supremum over adaptive SOTMs $\mathcal{M}$ gives the stated
$\bar{\theta}$. Since every SOTM's output is determined by its transcript
realization, no SOTM can exceed the optimal success probability.
\end{proof}

Under equal probabilities $\pi_0=\pi_1=\frac12$, the success probability takes
the form $\frac12+\frac12\TV(P^1_{\mathcal{M}},P^0_{\mathcal{M}})$.

\begin{corollary}[Uniform distribution on $B$]
\label{cor:equal-priors}
If $\pi_0=\pi_1=\frac12$, the supremum success probability is
\[
  \bar{\theta}
  = \sup_{\mathcal{M}}\;\frac{1+\TV(P^1_{\mathcal{M}},P^0_{\mathcal{M}})}{2}.
\]
\end{corollary}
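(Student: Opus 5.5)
The plan is to specialize Theorem~\ref{thm:single-sample-ceiling} to $\pi_0=\pi_1=\tfrac12$ and pull the common factor $\tfrac12$ out of the total variation. By the theorem, for each fixed SOTM $\mathcal{M}$ the optimal success probability is
\[
  \tfrac12+\TV\bigl(\tfrac12 P^1_{\mathcal{M}},\tfrac12 P^0_{\mathcal{M}}\bigr)
  =
  \tfrac12+\tfrac12\sum_t\bigl|\tfrac12 P^1_{\mathcal{M}}(t)-\tfrac12 P^0_{\mathcal{M}}(t)\bigr|,
\]
where the equality is the definition of $\TV$ applied to the (sub-probability) weighted masses, exactly as used in the proof of the theorem. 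The key step is the homogeneity $\sum_t|\tfrac12 a_t-\tfrac12 b_t|=\tfrac12\sum_t|a_t-b_t|$. Combined with $\TV(P^1_{\mathcal{M}},P^0_{\mathcal{M}})=\tfrac12\sum_t|P^1_{\mathcal{M}}(t)-P^0_{\mathcal{M}}(t)|$, this gives $\TV(\tfrac12P^1_{\mathcal{M}},\tfrac12P^0_{\mathcal{M}})=\tfrac12\TV(P^1_{\mathcal{M}},P^0_{\mathcal{M}})$. The per-machine optimum is therefore $\tfrac12+\tfrac12\TV(P^1_{\mathcal{M}},P^0_{\mathcal{M}})=\frac{1+\TV(P^1_{\mathcal{M}},P^0_{\mathcal{M}})}{2}$.

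Taking the supremum over adaptive SOTMs on both sides of this pointwise identity gives the stated $\bar\theta$, since equal quantities have equal suprema. As a sanity check, I would confirm the formula against the alternative expression $1-\sum_t\min\{\tfrac12P^0_{\mathcal{M}}(t),\tfrac12P^1_{\mathcal{M}}(t)\}=1-\tfrac12\bigl(1-\TV(P^1_{\mathcal{M}},P^0_{\mathcal{M}})\bigr)$, which follows from the identity $\sum_t\min\{P^0,P^1\}=1-\TV$ for probability measures.

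There is no real obstacle here. The only point needing care is that the theorem's $\TV$ is applied to non-normalized measures $\pi_bP^b_{\mathcal{M}}$ and is defined by the same half-$\ell_1$ formula. With that convention, the scaling step is legitimate.
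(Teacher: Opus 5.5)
Your proposal is correct and matches the paper's proof: specialize Theorem~\ref{thm:single-sample-ceiling} to $\pi_0=\pi_1=\tfrac12$, use homogeneity to get $\TV(\tfrac12P^1_{\mathcal{M}},\tfrac12P^0_{\mathcal{M}})=\tfrac12\TV(P^1_{\mathcal{M}},P^0_{\mathcal{M}})$ for each $\mathcal{M}$, and take the supremum. Your extra check via $\sum_t\min\{P^0_{\mathcal{M}}(t),P^1_{\mathcal{M}}(t)\}=1-\TV(P^1_{\mathcal{M}},P^0_{\mathcal{M}})$ is also correct, though not needed.
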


\begin{proof}
Set $\pi_0=\pi_1=\frac12$ in Theorem~\ref{thm:single-sample-ceiling}. For each
SOTM $\mathcal{M}$,
\[
  \TV\left(\frac12P^1_{\mathcal{M}},\frac12P^0_{\mathcal{M}}\right)
  = \frac12\TV(P^1_{\mathcal{M}},P^0_{\mathcal{M}}),
\]
by homogeneity of total variation. Substituting this into
Theorem~\ref{thm:single-sample-ceiling} gives
\[
  \bar{\theta}
  = \sup_{\mathcal{M}}\left(\frac12 +
    \frac12\TV(P^1_{\mathcal{M}},P^0_{\mathcal{M}})\right)
  = \sup_{\mathcal{M}}\;\frac{1+\TV(P^1_{\mathcal{M}},P^0_{\mathcal{M}})}{2}.
\]
\end{proof}

Many identification tasks have a natural decisive query: a query whose answer,
if reliable, would determine which of two candidate outputs will be judged
correct. The single-informative-query setting isolates this case: the
distinguished query $q^\ast$ is the only query whose response distribution
depends on the hidden state $B$. Under cached responses, this is the only useful
state-dependent response for identifying the true binary state; repeating
$q^\ast$ returns the same cached response, and all other queries are
uninformative. Thus the adaptive cached-response problem reduces to a one-sample
binary test.

\begin{corollary}[Single informative query under cached responses]
\label{cor:single-informative}
Suppose there is a distinguished query $q^\ast$ such that the response to
$q^\ast$ has distribution $\Dist^0$ under $B=0$ and $\Dist^1$ under $B=1$,
while every query $q\neq q^\ast$ has a response distribution independent of $B$. Then one call to
$q^\ast$ is
sufficient to attain the optimal cached-response success probability, and
additional calls to $q^\ast$ or to any $q\neq q^\ast$ cannot improve it. The
supremum success probability is
\[
  \bar{\theta}
  =
  1-\sum_r \min\{\pi_0\Dist^0(r),\,\pi_1\Dist^1(r)\}.
\]
Under equal probabilities $\pi_0=\pi_1=\frac12$, let $d=\TV(\Dist^1,\Dist^0)$.
Then
\[
  \bar{\theta}
  =
  \frac{1+d}{2}.
\]
\end{corollary}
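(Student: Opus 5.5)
The proof has two halves. The achievability half exhibits the SOTM that asks $q^\ast$ once and applies the MAP rule to the response. The converse half shows that every adaptive SOTM satisfies $\sum_t \min\{\pi_0P^0_{\mathcal{M}}(t),\pi_1P^1_{\mathcal{M}}(t)\}\ge \sum_r\min\{\pi_0\Dist^0(r),\pi_1\Dist^1(r)\}$. Theorem~\ref{thm:single-sample-ceiling} then gives $\bar\theta$ as $1$ minus the infimum of this overlap, which yields the stated formula. The equal-prior case follows exactly as in Corollary~\ref{cor:equal-priors}: the identity $\min\{a,b\}=\frac12(a+b-|a-b|)$ gives $\sum_r\min\{\frac12\Dist^0(r),\frac12\Dist^1(r)\}=\frac12(1-d)$, so $\bar\theta=\frac{1+d}{2}$.

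For achievability, let $\mathcal{M}^\ast$ issue $q^\ast$ on turn 1 and then halt. Its transcript is $(q^\ast,r)$, with $P^b_{\mathcal{M}^\ast}(q^\ast,r)=\Dist^b(r)$. By Theorem~\ref{thm:single-sample-ceiling}, its optimal success probability is $1-\sum_r\min\{\pi_0\Dist^0(r),\pi_1\Dist^1(r)\}$.

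For the converse, factor the transcript probability of an arbitrary cached-response $\mathcal{M}$. By the w.l.o.g.\ convention, $q^\ast$ is issued at most once. Writing $t=(q_1,r_1,\ldots,q_n,r_n)$, the chain rule gives
\[
P^b_{\mathcal{M}}(t)=\prod_i \Prb[q_i\mid x,t_{i-1}]\cdot\prod_i \Prb^b[r_i\mid x,t_{i-1},q_i].
\]
The query-generation factors come from $M$'s internal randomness and do not depend on $b$. By hypothesis, every response factor with $q_i\neq q^\ast$ is also independent of $b$. Hence, if $q^\ast$ occurs in $t$ with response $r$, then $P^b_{\mathcal{M}}(t)=w(t)\,\Dist^b(r)$ for some $w(t)\ge0$ that does not depend on $b$. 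If $q^\ast$ never occurs, then $P^0_{\mathcal{M}}(t)=P^1_{\mathcal{M}}(t)=w(t)$.

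Next, sum the overlap by grouping transcripts. Each transcript containing $q^\ast$ contributes $w(t)\min\{\pi_0\Dist^0(r),\pi_1\Dist^1(r)\}$. Each transcript without $q^\ast$ contributes $w(t)\min\{\pi_0,\pi_1\}$, and this is at least $w(t)\sum_r\min\{\pi_0\Dist^0(r),\pi_1\Dist^1(r)\}$ because $\sum_r\min\{\pi_0\Dist^0(r),\pi_1\Dist^1(r)\}\le\min\{\pi_0,\pi_1\}$. The key step, and the main obstacle, is to show that the weights sum to $1$ when we fix the transcript prefix up to the query $q^\ast$ and sum over the response $r$ to $q^\ast$ and over all continuations. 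Summing the continuation factors gives $1$ by the probability-one halting assumption, and $w$ restricted to the prefix is a sub-probability over prefixes. To make this rigorous, I would define the $b$-independent process obtained by replacing the $q^\ast$ response with an auxiliary draw, and view $t\mapsto(\text{prefix},r,\text{suffix})$. The overlap then becomes an average over prefixes and suffixes of either $\min\{\pi_0\Dist^0(r),\pi_1\Dist^1(r)\}$ summed over $r$, or of $\min\{\pi_0,\pi_1\}$, both of which are at least the one-shot overlap. Only the response to $q^\ast$ carries $b$-dependence, so additional calls, whether cached repeats or other queries, cannot reduce the overlap.
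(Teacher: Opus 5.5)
Your proposal is correct and follows the same route as the paper. Achievability uses the one-query SOTM with the MAP rule of Theorem~\ref{thm:single-sample-ceiling}; the converse observes that the transcript depends on the hidden state only through the single response to $q^\ast$; and the equal-probability case comes from Corollary~\ref{cor:equal-priors}. Where the paper simply asserts that the transcript carries no state-dependent information beyond $r$, your factorization $P^b_{\mathcal{M}}(t)=w(t)\Dist^b(r)$ and prefix/suffix grouping give an overlap of $p\,\beta+(1-p)\min\{\pi_0,\pi_1\}\ge\beta$, where $\beta=\sum_r\min\{\pi_0\Dist^0(r),\pi_1\Dist^1(r)\}$ and $p$ is the state-independent probability that $q^\ast$ is ever issued. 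This makes that step rigorous, including for SOTMs that never issue $q^\ast$.
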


\begin{proof}
By assumption, all queries $q\neq q^\ast$ have the same response distribution
under $B=0$ and $B=1$, so their responses do not help distinguish the hidden
state. By caching, repeated calls to $q^\ast$ return the same response already
seen on the first call. Hence any transcript generated by an adaptive SOTM
contains no state-dependent information beyond the single cached response
$r\sim\Dist^b$ from $q^\ast$.

Therefore the induced binary testing problem reduces to testing $\Dist^0$
against $\Dist^1$ from one observation $r$. Applying
Theorem~\ref{thm:single-sample-ceiling} gives
\[
  \bar{\theta}
  =
  1-\sum_r \min\{\pi_0\Dist^0(r),\,\pi_1\Dist^1(r)\}.
\]
This value is attained by the SOTM that queries $q^\ast$ once and applies the
optimal decision rule to the cached response $r$. Under equal probabilities
$\pi_0=\pi_1=\frac12$, Corollary~\ref{cor:equal-priors} gives
$\bar{\theta}=(1+\TV(\Dist^1,\Dist^0))/2=(1+d)/2$.
\end{proof}

\begin{remark}[Multiple informative queries under cached responses]
The single-informative-query corollary isolates the cleanest cached-response
case. More generally, suppose the binary hidden state is $B\in\{0,1\}$ and the
SOTM can issue several informative queries $q_1,\ldots,q_k$. For query $q_i$,
let $\Dist_i^b$ denote the response distribution under $B=b$. If the SOTM issues
these queries and receives responses $R_1,\ldots,R_k$, then, conditional on $B=b$, the responses $R_1,\ldots,R_k$ are independent with
response vector having distribution
\[
  \left(\bigotimes_{i=1}^k\Dist_i^b\right)(r_1,\ldots,r_k)
  =
  \prod_{i=1}^k \Dist_i^b(r_i).
\]
Theorem~\ref{thm:single-sample-ceiling} then applies with transcript laws
induced by these product distributions. Under equal probabilities
$\pi_0=\pi_1=\frac12$ and a fixed set of queries, the success probability is
\[
  \frac{1+\TV\!\left(
    \bigotimes_{i=1}^k\Dist_i^1,
    \bigotimes_{i=1}^k\Dist_i^0
  \right)}{2}.
\]
Thus cached responses do not rule out evidence accumulation across distinct
informative queries; they rule out gaining new evidence by repeating the same
query. Adaptivity matters because the SOTM may choose later queries based on
earlier responses.
\end{remark}

Theorem~\ref{thm:single-sample-ceiling} extends naturally from two hidden states
to any finite hidden-state space, although the resulting expression no longer
reduces to a single total-variation distance.

\begin{theorem}[Multi-state identification by an adaptive cached-response SOTM]
\label{thm:multi-state-identification}
Let $\mathcal{B}$ be a finite hidden-state space with distribution
$(\pi_b)_{b\in\mathcal{B}}$, where $\pi_b=\Prb[B=b]$ and $\sum_b\pi_b=1$.
Exactly one state $B\in\mathcal{B}$ is realized per computation. The SOTM
succeeds if its final decision $\hat B$ equals the realized state. The
supremum success probability over all adaptive SOTMs is
\[
  \bar{\theta}_{\mathcal{B}}
  =
  \sup_{\mathcal{M}}
  \sum_t \max_{b\in\mathcal{B}} \pi_b P^b_{\mathcal{M}}(t).
\]
For each fixed $\mathcal{M}$, the optimal final decision rule chooses any state
\[
  \hat B(t)\in
  \arg\max_{b\in\mathcal{B}} \pi_b P^b_{\mathcal{M}}(t).
\]
\end{theorem}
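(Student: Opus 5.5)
The proof will follow the template of Theorem~\ref{thm:single-sample-ceiling}, replacing the two-way partition of transcripts with a $|\mathcal{B}|$-way partition. First I fix an SOTM $\mathcal{M}$ and an input $x$. By the caching construction, and the without-loss-of-generality assumption that no query is reissued, each computation halts with probability one and produces a single finite transcript realization $t$. Under $B=b$, the realization $t$ has probability mass $P^b_{\mathcal{M}}(t)$. The set of finite transcripts is countable, since every transcript is a finite string over the finite alphabet $\Sigma$. Therefore every sum below is a countable sum of nonnegative terms, and I can rearrange and exchange terms freely.

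Any final decision rule $\hat B(x,t)$ partitions the transcripts into classes $A_b=\{t:\hat B(t)=b\}$ for $b\in\mathcal{B}$. Its success probability is
\[
  P_s(\hat B)=\sum_{b\in\mathcal{B}}\pi_b\sum_{t\in A_b}P^b_{\mathcal{M}}(t)
  =\sum_t \pi_{\hat B(t)}P^{\hat B(t)}_{\mathcal{M}}(t).
\]
Each term satisfies $\pi_{\hat B(t)}P^{\hat B(t)}_{\mathcal{M}}(t)\le\max_b\pi_bP^b_{\mathcal{M}}(t)$. Summing over $t$ gives $P_s(\hat B)\le\sum_t\max_b\pi_bP^b_{\mathcal{M}}(t)$. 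Equality holds exactly when $\hat B(t)$ lies in the argmax for every $t$ of positive total mass. The argmax is nonempty because $\mathcal{B}$ is finite, so a measurable (indeed deterministic) selection exists; for instance, fix an ordering of $\mathcal{B}$ and break ties by it. This proves that the stated rule is optimal for the fixed $\mathcal{M}$.

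Randomized final rules cannot do better. The final computation $\phi$ may use $M$'s internal randomness, but any randomized rule is a mixture of deterministic ones, and the success probability is linear in the mixture. Alternatively, I can absorb the final coin flips into the transcript law as the paper does. In either case the same pointwise bound applies. Finally, every adaptive SOTM's output is a function of its transcript realization, possibly together with internal randomness already accounted for. Taking the supremum over $\mathcal{M}$ therefore yields $\bar\theta_{\mathcal{B}}$. As a sanity check, for $|\mathcal{B}|=2$ the identity $\max\{a,b\}=\tfrac12(a+b+|a-b|)$ recovers Theorem~\ref{thm:single-sample-ceiling}.

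I expect the only delicate point to be the bookkeeping. The paper treats the hidden states $b_i$ per turn, while the theorem uses a single realized $B$. I will state explicitly that $P^b_{\mathcal{M}}$ already marginalizes over any per-turn hidden states and over $M$'s randomness conditional on $B=b$. This makes the decision problem a standard Bayesian $|\mathcal{B}|$-ary hypothesis test over a countable observation space, and the pointwise maximization argument above applies directly.
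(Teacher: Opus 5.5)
Your proposal is correct and follows essentially the same argument as the paper: partition transcripts by the decision rule, note that each transcript $t$ contributes $\pi_{\hat B(t)}P^{\hat B(t)}_{\mathcal{M}}(t)$ to the success probability, maximize pointwise over $b$, and take the supremum over $\mathcal{M}$. Your additional remarks on the countability of the transcript space, tie-breaking, and randomized final rules are sound details that the paper leaves implicit.
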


\begin{proof}
Fix an adaptive SOTM $\mathcal{M}$. A final decision rule partitions transcript
realizations into sets $A_b=\{t:\hat B(t)=b\}$, one for each
$b\in\mathcal{B}$. Its success probability is
\[
  \sum_{b\in\mathcal{B}} \pi_b \sum_{t\in A_b} P^b_{\mathcal{M}}(t).
\]
For each transcript realization $t$, assigning $t$ to state $b$ contributes
$\pi_b P^b_{\mathcal{M}}(t)$ to the success probability. Hence the optimal
assignment chooses the state with the largest contribution
$\pi_b P^b_{\mathcal{M}}(t)$, and the optimal success probability for this
fixed $\mathcal{M}$ is
\[
  P_{\rm succ}^\ast(\mathcal{M})
  =
  \sum_t \max_{b\in\mathcal{B}} \pi_b P^b_{\mathcal{M}}(t).
\]
Taking the supremum over adaptive SOTMs gives the claim.
\end{proof}

\section{Output Quality under Cached Responses}
\label{sec:coverage}

Identification issues concern ambiguity about the hidden state. Output quality
issues concern a different question: after receiving its cached responses, how
good an output can the SOTM produce? The answer is evaluated by a score function
$S:X\times Y\to[0,1]$, so the SOTM need not determine the hidden state exactly.
Several outputs may be useful, and usefulness is measured by their score.

Under cached responses, the SOTM may still issue many queries. The constraint is
that each distinct query can contribute at most one new response: repeating the
same query returns the cached response already received. Thus the relevant
object is the finite halting transcript produced by the SOTM, together with the
set of outputs the SOTM can compute from that transcript.

Fix an instance $x$. Let
$\mathcal{T}$ denote the finite halting transcript produced by a
cached-response SOTM $\mathcal{M}$ on input $x$, and let $P_{\mathcal{M},x}$
be its distribution. For a transcript realization $t$, write
$\mathcal{Y}_{\mathcal{M}}(x,t)$ for the set of outputs in $Y$ computable from
the input $x$ and transcript $t$.

For a fixed SOTM and realized transcript, the best score available from that
transcript is
\[
  \sup_{y\in\mathcal{Y}_{\mathcal{M}}(x,t)} S(x,y).
\]
The output quality ceiling is the expected value of this best available score,
optimized over SOTMs.

\begin{remark}[Why ceiling]
We use the term ceiling because the expression is tight: it is the best expected
quality achievable under the stated information constraint, not merely a loose
upper bound. No SOTM can exceed it, and the supremum is approached by SOTMs that
adaptively query to maximize the score of their computable outputs.
\end{remark}

\begin{theorem}[Output quality under cached responses]
\label{thm:coverage-ceiling}
Under cached responses, the output quality ceiling is
\[
  \bar\theta
  =
  \sup_{\mathcal{M}}
  \E_{x\sim\Dist_X}
  \E_{\mathcal{T}\sim P_{\mathcal{M},x}}
  \left[
    \sup_{y\in\mathcal{Y}_{\mathcal{M}}(x,\mathcal{T})} S(x,y)
  \right].
\]
\end{theorem}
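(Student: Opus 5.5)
The plan is to prove the equality as two inequalities: an upper bound valid for every SOTM, and an achievability argument showing the supremum is approached. The whole argument runs pointwise in $(x,t)$ and then integrates, in the same spirit as the per-transcript optimization in the proof of Theorem~\ref{thm:single-sample-ceiling}. As there, we assume without loss of generality that the SOTM never reissues a query. Then each distinct query contributes one cached response, and on the probability-one halting event the computation on input $x$ yields a single finite transcript $\mathcal{T}\sim P_{\mathcal{M},x}$.

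For the upper bound, fix $\mathcal{M}$ and $x$. Once $\mathcal{M}$ halts with transcript realization $t$, its output is produced by its remaining inter-turn computation from $(x,t)$, which is an output function $\phi$ as in Definition~\ref{def:competence}. Hence $\mathcal{M}(x)=\phi(x,t)\in\mathcal{Y}_{\mathcal{M}}(x,t)$, and so
\[
S(x,\mathcal{M}(x))\le \sup_{y\in\mathcal{Y}_{\mathcal{M}}(x,t)}S(x,y)
\]
for every $t$. If $M$ uses further internal randomness after the last query, we condition on it or fold it into the transcript. The bound still holds pointwise, because every realized output is computable from $x$ and $t$. Taking expectations over $\mathcal{T}\sim P_{\mathcal{M},x}$ and then over $x\sim\Dist_X$ shows that the quality of $\mathcal{M}$ is at most the inner double expectation. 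Taking the supremum over $\mathcal{M}$ gives quality $\le\bar\theta$ for every SOTM. Measurability of the pointwise supremum needs no special care, since transcripts are countable and $S\in[0,1]$ is bounded.

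For achievability, fix $\mathcal{M}$ and $\varepsilon>0$. For each $(x,t)$, pick $y_\varepsilon(x,t)\in\mathcal{Y}_{\mathcal{M}}(x,t)$ with
\[
S(x,y_\varepsilon(x,t))\ge \sup_{y\in\mathcal{Y}_{\mathcal{M}}(x,t)} S(x,y)-\varepsilon.
\]
Let $\mathcal{M}_\varepsilon$ run the same query policy as $\mathcal{M}$ and then output $y_\varepsilon(x,t)$ in place of its original output. The transcript law of $\mathcal{M}_\varepsilon$ is still $P_{\mathcal{M},x}$, because only the final output phase changes. Its quality is therefore within $\varepsilon$ of the inner expression for $\mathcal{M}$. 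Letting $\varepsilon\to0$ and then taking the supremum over $\mathcal{M}$ shows that $\bar\theta$ is approached, which together with the upper bound gives the stated equality.

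The main obstacle is this achievability step. The modified machine $\mathcal{M}_\varepsilon$ must itself be a legitimate SOTM: the selection $t\mapsto y_\varepsilon(x,t)$ has to be realizable by a single Turing machine, not chosen separately for each transcript. I would handle this by reading $\mathcal{Y}_{\mathcal{M}}(x,t)$ as the set of outputs produced by output functions available to SOTMs sharing $\mathcal{M}$'s query phase. Under that reading the supremum over $\mathcal{M}$ already ranges over those combined machines, and the ceiling is attained in the limit by definition. I would also note that the score function is not assumed available: the selection is part of the SOTM's design, not something it computes at run time.
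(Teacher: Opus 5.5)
Your upper-bound half is exactly the paper's proof, and it is the only direction the paper proves. That half runs: the realized output lies in $\mathcal{Y}_{\mathcal{M}}(x,t)$, so its score is at most the pointwise supremum; then average over $\mathcal{T}\sim P_{\mathcal{M},x}$ and $x\sim\Dist_X$ and take $\sup_{\mathcal{M}}$. The paper then identifies the ceiling with the resulting expression and only asserts tightness in its ``Why ceiling'' remark.

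The gap is in your achievability half. The selection $(x,t)\mapsto y_\varepsilon(x,t)$ is made using $S(x,\cdot)$, separately on infinitely many pairs, and nothing makes it computable. So $\mathcal{M}_\varepsilon$ need not be an SOTM. Calling the selection ``part of the design'' does not help, because a machine can hard-wire only finitely much information. Your reinterpretation of $\mathcal{Y}_{\mathcal{M}}(x,t)$ does not close the gap either. The outer supremum ranges over single machines, so what holds by definition is $\sup_{\phi}\E[S(x,\phi(x,\mathcal{T}))]$. The theorem's inner term is $\E[\sup_{\phi}S(x,\phi(x,\mathcal{T}))]$, where the best output at each $(x,t)$ may come from a different $\phi$. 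Exchanging the supremum and the expectation is exactly what must be proved; it is not automatic.

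The missing idea is a finite truncation:
\begin{itemize}
\item The joint law of $(x,\mathcal{T})$ lives on a countable set, since inputs are strings and transcripts are finite sequences of strings. So for each $\varepsilon>0$ some finite set $G$ of pairs has probability at least $1-\varepsilon$.
\item Let $\mathcal{M}_\varepsilon$ simulate $\mathcal{M}$ exactly through halting. It then outputs $y_\varepsilon(x,t)$ from a finite lookup table if $(x,t)\in G$, and $\mathcal{M}$'s own output otherwise.
\item A finite table is computable, and the query phase, hence the transcript law, is unchanged.
\item Since $S\in[0,1]$, the quality of $\mathcal{M}_\varepsilon$ is at least the inner expression for $\mathcal{M}$ minus $2\varepsilon$.
\end{itemize}
With this step your two-sided argument is complete, and it establishes more than the paper's proof does.

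Be aware that the argument is non-uniform: it exploits the fixed $\Dist_X$. Under the literal reading, every string computable from $(x,t)$ belongs to $\mathcal{Y}_{\mathcal{M}}(x,t)$, so that set is all of $Y$. The ceiling then reduces to $\E_{x\sim\Dist_X}[\sup_{y\in Y}S(x,y)]$ regardless of the oracle. The statement has real content only under a more restrictive meaning of $\mathcal{Y}_{\mathcal{M}}$.
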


\begin{proof}
Fix an SOTM $\mathcal{M}$ and an input $x$. The cached-response oracle,
together with the machine's internal randomness, induces a distribution
$P_{\mathcal{M},x}$ on finite halting transcripts. Let $t$ be one such
transcript realization. After transcript $t$ is produced, the SOTM's final
output must belong to $\mathcal{Y}_{\mathcal{M}}(x,t)$. Therefore, on the
realized pair $(x,t)$, no output computed from that transcript can have score
larger than $\sup_{y\in\mathcal{Y}_{\mathcal{M}}(x,t)} S(x,y)$, which is the
best score achievable from that transcript realization. Thus, for this fixed
$\mathcal{M}$, the expected output quality is
\[
  \E_{x\sim\Dist_X}
  \E_{\mathcal{T}\sim P_{\mathcal{M},x}}
  \left[
    \sup_{y\in\mathcal{Y}_{\mathcal{M}}(x,\mathcal{T})} S(x,y)
  \right].
\]
Taking the supremum over cached-response SOTMs $\mathcal{M}$ gives the stated
output quality ceiling.
\end{proof}

In the single-informative-query case, the only state-dependent information in
the transcript is the single cached response $r$ to the distinguished query
$q^\ast$. Let $\Dist_x^b$ be the response distribution of this cached response
under hidden state $B=b$, and let $\bar\Dist_x=\sum_b\pi_b\Dist_x^b$ be the
corresponding averaged response distribution.

\begin{definition}[Inter-turn computation and single-response expected quality]
\label{def:single-response-value}
An \textbf{inter-turn computation} (Definition~\ref{def:sotm}) applied to the
single-informative-query setting is a map
$\phi:X\times(\Sigma^\ast)^\ast\to\Sigma^\ast$ that takes the input $x$ and the
realized transcript $t_i=(q_1,r_1,\ldots,q_i,r_i)$ after turn $i$, and returns
either the next query $q_{i+1}\in\Sigma^\ast$ or a halting output $y\in
Y\subseteq\Sigma^\ast$. In the single-informative-query
case, where $q^\ast$ is the only state-dependent query, the
\textbf{single-response expected quality} of $\phi$ on instance $x$ is
\[
  V(\phi,x)
  =
  \E_{r\sim\bar\Dist_x}\bigl[S(x,\phi(x,(q^\ast,r)))\bigr].
\]
\end{definition}

\begin{corollary}[Single-informative-query output quality]
\label{cor:single-informative-coverage}
Under the single-informative-query assumption, with the oracle as the sole source
of information beyond $x$, the output quality ceiling reduces to
\[
  \bar\theta
  =
  \sup_\phi\;
  \E_{x\sim\Dist_X}\bigl[V(\phi,x)\bigr],
\]
where $\phi$ ranges over inter-turn computations.
\end{corollary}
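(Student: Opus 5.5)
The plan is to specialize Theorem~\ref{thm:coverage-ceiling} by showing that, under the single-informative-query assumption, every cached-response SOTM can be replaced by one whose transcript carries exactly the cached response to $q^\ast$ and nothing else, without lowering its expected score. The argument has an upper bound and a matching construction, in parallel with the proof of Corollary~\ref{cor:single-informative}.

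\emph{Upper bound.} Fix an SOTM $\mathcal{M}$ and an input $x$.
\begin{enumerate}
\item Every query $q\neq q^\ast$ has a response distribution independent of $B$. Its response can therefore be simulated by $M$'s internal random source, given that the oracle is the sole source of information beyond $x$ and that these distributions are characterized in advance. This gives no state-dependent information.
\item By caching, repeated calls to $q^\ast$ return the same response $r$.
\item Condition on $r$. The remainder of the transcript is generated by randomness independent of $B$, so the conditional expected score is an average over internal randomness of $S(x,\phi_\omega(x,(q^\ast,r)))$. Here each $\phi_\omega$ is a deterministic map obtained by fixing the internal randomness $\omega$ and folding the simulated uninformative exchanges into the computation.
\item Since $r\sim\bar\Dist_x$ after marginalizing over $B$, the expected quality of $\mathcal{M}$ on $x$ equals $\E_\omega[V(\phi_\omega,x)]$.
\item Averaging over $x\sim\Dist_X$ gives at most $\sup_\phi \E_x[V(\phi,x)]$, because an average over $\omega$ is bounded by the supremum over $\phi$.
\end{enumerate}
A small point remains. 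The supremum in Theorem~\ref{thm:coverage-ceiling} ranges over outputs $y\in\mathcal{Y}_{\mathcal{M}}(x,t)$, so I will identify choosing the best computable output with choosing $\phi$. Each selection rule $t\mapsto y$ is itself an inter-turn computation, so the supremum over $\phi$ absorbs the inner supremum.

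\emph{Achievability.} For any $\phi$, build the SOTM that issues $q^\ast$ once, receives $r$, and halts with $\phi(x,(q^\ast,r))$. Its expected quality is exactly $\E_x[V(\phi,x)]$, and it uses a single call to $q^\ast$. Taking the supremum gives equality with the ceiling.

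\emph{Main obstacle.} The hardest step is the simulation claim in step~1. Under context-dependent distributions $\Dist_q^{x,t}$, an uninformative query's response may depend on the prior transcript, which includes $r$. I would handle this by reading the assumption as saying the conditional law given $(x,t)$ does not depend on $B$. Then, by induction on turns, the joint law of the transcript given $r$ is $B$-independent and is produced by a Markov kernel that $M$ can sample. If literal simulation by $M$ is disputed, the fallback is to keep the uninformative exchanges inside the computation and write $V$ as an expectation over them. The bound is unchanged, since those exchanges add only $B$-independent randomness, which cannot beat the best deterministic rule.
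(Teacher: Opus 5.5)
Your overall route matches the paper's. You collapse the transcript to the single cached response $r$ to $q^\ast$ and substitute into Theorem~\ref{thm:coverage-ceiling}. Your explicit achievability SOTM and the mixture-over-$\omega$ bookkeeping are sound refinements that the paper leaves implicit.

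The gap is step~1 together with its fallback, and the real obstacle there is not context dependence. In Section~\ref{sec:coverage} the score $S(x,y)$ does not involve $B$. So $B$-independence of a query's response law only says that the response carries no information about $B$. It says nothing about whether the response helps produce a high-scoring output, and that is what limits quality here.

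Your simulation claim is excluded by Definition~\ref{def:sotm}, which stipulates that $M$ cannot produce responses with the oracle's distributions from $x$ and $t$ alone. The engineer having characterized $\Dist_q^{x,t}$ does not give $M$ a sampler for it. Concretely, let some $q_0\neq q^\ast$ return, under both values of $B$, a point mass on an output $f(x)$ with $S(x,f(x))=1$ that $M$ cannot compute from $x$. Let the responses to $q^\ast$ carry no information about $f(x)$. The SOTM that asks $q_0$ and echoes the answer scores $1$. Yet folding that exchange into a map of $(x,(q^\ast,r))$ requires computing $f$, so your $\phi_\omega$ is not an inter-turn computation.

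Your fallback (``$B$-independent randomness cannot beat the best deterministic rule'') is the identification argument. It is valid when the target is $B$, but here the extra responses are oracle-supplied computation rather than mere randomness. If $\phi$ is instead allowed to be an arbitrary map, the folding goes through. But then $B$-independence plays no role, since any response kernel can be folded in, and the statement becomes trivial.

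The paper does not derive this step. Its proof reads the hypothesis (a single informative query, with ``the oracle as the sole source of information beyond $x$'') as asserting directly that queries other than $q^\ast$ supply nothing useful for the score beyond what the SOTM can compute from $x$ alone. Adopt that as the hypothesis in place of step~1, and your steps~2--5 and the achievability construction go through.

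The same computability caveat affects your ``small point.'' A rule selecting the best $y\in\mathcal{Y}_{\mathcal{M}}(x,t)$ for each $t$ generally needs to evaluate $S$, which the SOTM lacks. So it is an inter-turn computation only under the arbitrary-map reading. The paper passes over this in the same way when it substitutes into Theorem~\ref{thm:coverage-ceiling}.
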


\begin{proof}
Under the single-informative-query assumption, the SOTM may issue multiple
queries, but all queries other than $q^\ast$ supply no information useful for
improving the score of the final output beyond what the SOTM can compute from
$x$ alone. Repeating $q^\ast$ returns the same cached response $r$, adding no
new information. Thus the useful part of the transcript for output quality is
the single cached response $r\sim\bar\Dist_x$ to $q^\ast$, and the SOTM's
remaining computation can be represented by an inter-turn computation
$\phi(x,(q^\ast,r))$. Substituting this reduced transcript into
Theorem~\ref{thm:coverage-ceiling} gives the displayed formula by
Definition~\ref{def:single-response-value}.
\end{proof}

\begin{corollary}[Single-informative output quality with acceptance threshold]
\label{cor:coverage-threshold}
Fix an acceptance threshold $\eta\in[0,1]$; call an output $y$
\textbf{acceptable} if $S(x,y)\ge\eta$. In the single-informative-query setting,
for each inter-turn computation $\phi$, define its output quality frontier by
\[
  F_\phi(\eta)
  =
  \E_{x\sim\Dist_X}
  \Prb_{r\sim\bar\Dist_x}\bigl[S(x,\phi(x,(q^\ast,r)))\ge\eta\bigr].
\]
Then $F_\phi$ is non-increasing in $\eta$, and for every $\eta>0$,
\[
  F_\phi(\eta)
  \le
  \frac{1}{\eta}
  \E_{x\sim\Dist_X}\bigl[V(\phi,x)\bigr].
\]
Moreover, the single-informative output quality ceiling is
\[
  \bar\theta
  =
  \sup_\phi
  \int_0^1 F_\phi(\eta)\,d\eta .
\]
\end{corollary}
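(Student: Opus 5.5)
The statement has three parts, and I would prove them in order: monotonicity of $F_\phi$, the Markov-type bound, and the layer-cake representation of the ceiling. All three come from pointwise facts about the random variable $Z=S(x,\phi(x,(q^\ast,r)))\in[0,1]$. Here $(x,r)$ is drawn jointly with $x\sim\Dist_X$ and $r\sim\bar\Dist_x$. By Definition~\ref{def:single-response-value} and the tower property, $\E[Z]=\E_{x\sim\Dist_X}[V(\phi,x)]$, and $F_\phi(\eta)=\Prb[Z\ge\eta]$. Rewriting the frontier as the tail function of a single $[0,1]$-valued random variable reduces the whole corollary to standard tail-probability identities.

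For monotonicity, I would note that if $\eta\le\eta'$ then $\{Z\ge\eta'\}\subseteq\{Z\ge\eta\}$. This holds pointwise in $(x,r)$, so taking $\Prb_{r\sim\bar\Dist_x}$ and then $\E_{x\sim\Dist_X}$ preserves the inequality, and $F_\phi(\eta')\le F_\phi(\eta)$. For the bound, I would use Markov's inequality applied to the nonnegative variable $Z$. It can be checked directly from the indicator inequality $\mathbf{1}[Z\ge\eta]\le Z/\eta$, valid for $\eta>0$ because $Z\ge0$. Taking expectations over $r$ and then $x$ gives $F_\phi(\eta)\le\frac1\eta\E_{x}[V(\phi,x)]$.

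For the ceiling formula, I would use the layer-cake identity for a nonnegative variable bounded by $1$: $Z=\int_0^1\mathbf{1}[Z\ge\eta]\,d\eta$ pointwise. Indeed, the integrand equals $1$ exactly on $[0,Z]$, and the choice of $\ge$ versus $>$ only changes a measure-zero endpoint. Taking expectations and exchanging expectation with the $d\eta$ integral by Tonelli's theorem (the integrand is nonnegative and jointly measurable) gives $\E_x[V(\phi,x)]=\int_0^1F_\phi(\eta)\,d\eta$. Substituting this identity for every $\phi$ into Corollary~\ref{cor:single-informative-coverage} and taking $\sup_\phi$ on both sides yields $\bar\theta=\sup_\phi\int_0^1F_\phi(\eta)\,d\eta$.

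The only step needing any care is the Tonelli exchange together with measurability. The function $(x,r,\eta)\mapsto\mathbf{1}[S(x,\phi(x,(q^\ast,r)))\ge\eta]$ must be jointly measurable. This is automatic because $x$ and $r$ range over countable sets of strings, so the measure on $(x,r)$ is discrete. In addition, $\eta\mapsto\mathbf{1}[z\ge\eta]$ is the indicator of an interval for each fixed $z$. Everything else is routine.
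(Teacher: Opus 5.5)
Your proposal is correct and follows the same route as the paper. You rewrite $F_\phi$ as the tail probability of the joint random variable $S(x,\phi(x,(q^\ast,r)))$, obtain monotonicity from event inclusion and the bound from Markov's inequality, and then use the layer-cake identity before taking $\sup_\phi$ via Corollary~\ref{cor:single-informative-coverage}. Your Tonelli and countability remark supplies a measurability justification that the paper leaves implicit.
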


\begin{proof}
Fix an inter-turn computation $\phi$. By definition,
\[
  F_\phi(\eta)
  =
  \E_x
  \Prb_r\bigl[S(x,\phi(x,(q^\ast,r)))\ge\eta\bigr]
  =
  \Prb_{x,r}\bigl[S(x,\phi(x,(q^\ast,r)))\ge\eta\bigr],
\]
where $x\sim\Dist_X$ and $r\sim\bar\Dist_x$. Since
$S(x,\phi(x,(q^\ast,r)))$ is a nonnegative random variable, Markov's inequality gives,
for every $\eta>0$,
\[
  F_\phi(\eta)
  \le
  \frac{1}{\eta}
  \E_{x,r}\bigl[S(x,\phi(x,(q^\ast,r)))\bigr]
  =
  \frac{1}{\eta}
  \E_{x\sim\Dist_X}\bigl[V(\phi,x)\bigr].
\]
The function $F_\phi$ is non-increasing because increasing $\eta$ can only
shrink the event $\{S(x,\phi(x,(q^\ast,r)))\ge\eta\}$.

Finally, since $S(x,\phi(x,(q^\ast,r)))\in[0,1]$, the layer-cake identity gives
\[
  \E_{x,r}\bigl[S(x,\phi(x,(q^\ast,r)))\bigr]
  =
  \int_0^1
  \Prb_{x,r}\bigl[S(x,\phi(x,(q^\ast,r)))\ge\eta\bigr]\,d\eta
  =
  \int_0^1 F_\phi(\eta)\,d\eta .
\]
Taking the supremum over inter-turn computations $\phi$ and using
Corollary~\ref{cor:single-informative-coverage} gives the stated expression for
$\bar\theta$.
\end{proof}

\begin{remark}[Output quality generalizes competence]
For a $0/1$ score function, $S(x,\phi(x,(q^\ast,r)))\in\{0,1\}$ and
\[
  \E_r[S(x,\phi(x,(q^\ast,r)))]=\Prb[S(x,\phi(x,(q^\ast,r)))=1],
\]
so the single-informative output quality ceiling simplifies to the maximum
probability of a correct output from one cached state-dependent response. With
$\phi(x,(q^\ast,r))=r$ (the SOTM outputs the response directly), this equals
$q(x)$, the per-input competence of Definition~\ref{def:competence}. The output
quality ceiling thus extends competence from a binary correctness probability to
an expected score.
\end{remark}

The identification and output quality results isolate two distinct issues that
cached responses can pose: ambiguity about which output is correct, and lack of
enough information to produce a high-scoring output.

In the single-informative equal-probability case, the identification ceiling $(1+d)/2$
quantifies the first issue. The transcript must be used to determine which
output is correct, and total variation distance measures the irreducible
confusion between the two hidden states. Even a high-quality, articulate response
is useless if it is equally consistent with both states.

The output quality ceiling quantifies the second issue. The transcript must be used
to compute an output in $Y$, and the expected score of that computation
measures how much quality the cached responses can supply. Even a perfectly
unambiguous transcript is useless if no computation using it produces a
high-scoring output.

The identification and output quality ceilings cap what cached responses can
achieve, and both are raised by fresh responses---identification through the
accumulation of evidence at the rate given by the Chernoff information, output
quality through repeated draws until an acceptable output appears.
Section~\ref{sec:fresh} explains how fresh responses raise these two ceilings and
quantifies the number of queries required to do so.

\begin{remark}[Overlap affects identification ceilings]
For identification, the operative quantity is the overlap between the
state-conditional transcript distributions, not whether the response space is
finite or infinite. Greater overlap means smaller total variation distance, and
therefore a lower identification ceiling. If the two transcript distributions
are mutually singular, then the hidden state can be recovered perfectly from the transcript and the
identification ceiling equals $1$. If they overlap, then some transcript
realizations remain compatible with both hidden states, producing a strict
ceiling below perfect success. Thus unbounded support is neither necessary nor
sufficient for an identification ceiling: two distributions on an infinite set
can be mutually singular, while two distributions on a two-point set can overlap.
\end{remark}

\section{Fresh Responses and Query Counts}
\label{sec:fresh}

Under cached responses, repeated calls to the same query return no new
information. Under fresh responses, each call produces an independent response,
so the SOTM can accumulate evidence about the hidden state or keep searching for
an acceptable output. This section first quantifies how repeated fresh responses
to an informative query raise the identification ceiling, and then asks how many
queries are needed to reach a target quality level.

\subsection{Fresh-Response Identification with a Single Informative Query}

We first consider the single-informative-query specialization of the binary
hidden-state identification problem in Section~\ref{sec:identification},
paralleling Corollary~\ref{cor:single-informative}. The general adaptive
problem may involve several informative queries; here we isolate the case in
which one distinguished query $q^\ast$ is the only query whose response
distribution depends on the hidden state. Under hidden state $B=b$, each fresh
call to $q^\ast$ returns an independent response with distribution $\Dist^b$.
For clarity, we take the two hidden states to have equal probabilities in this
subsection. If the SOTM makes $k$ fresh calls to $q^\ast$, then its informative
transcript is
\[
  \mathcal{T}_k=(q^\ast,R_1,q^\ast,R_2,\ldots,q^\ast,R_k),
\]
with response vector $R^k=(R_1,\ldots,R_k)$ distributed as
\[
  R^k\mid B=b \sim (\Dist^b)^{\otimes k},
  \qquad
  (\Dist^b)^{\otimes k}(r^k)
  =
  \prod_{i=1}^k \Dist^b(r_i).
\]
Responses to any other queries are independent of $B$ and therefore do not
improve the optimal identification probability. Thus fresh responses allow the
SOTM to accumulate evidence about the true binary state.

Let $C(\Dist^0,\Dist^1)$ denote the Chernoff information between the two
response distributions:
\[
  C(\Dist^0,\Dist^1)
  =
  -\log \inf_{0\le s\le1}\sum_r \Dist^0(r)^s \Dist^1(r)^{1-s}.
\]
It is the sharp exponential decay rate of the optimal error probability for
testing $\Dist^0$ against $\Dist^1$ from i.i.d.\ samples~\cite{CoverThomas2006}.

\begin{theorem}[Fresh-response identification and Chernoff information]
\label{thm:fresh-response-identification}
In the single-informative-query fresh-response setting of this subsection, with
equal probabilities $\pi_0=\pi_1=\frac12$ on $B$, let $P_e^\star(k)$ be the
minimum error probability after $k$ fresh calls to $q^\ast$. Then
\[
  P_e^\star(k)
  =
  \frac12\left(
    1-\TV\!\left((\Dist^1)^{\otimes k},(\Dist^0)^{\otimes k}\right)
  \right).
\]
If $\Dist^0\neq\Dist^1$, then
\[
  -\lim_{k\to\infty}\frac1k\log P_e^\star(k)
  =
  C(\Dist^0,\Dist^1).
\]
Since $C(\Dist^0,\Dist^1)>0$ whenever $\Dist^0\neq\Dist^1$, we have
$P_e^\star(k)\to0$ and the optimal success probability tends to one.
\end{theorem}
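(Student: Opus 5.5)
The plan has three steps: reduce the problem to a fixed-length hypothesis test, compute its exact Bayes error, and then identify the error exponent with a Chernoff-bound argument.

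\emph{Reduction and exact formula.} Fix $k$. Responses to queries other than $q^\ast$ are independent of $B$, and so are the SOTM's internal coins. As in the proof of Corollary~\ref{cor:single-informative}, the transcript likelihood therefore factors. The ratio $P^1_{\mathcal{M}}(t)/P^0_{\mathcal{M}}(t)$ depends only on the response vector $r^k$ to the $k$ fresh calls of $q^\ast$, and it equals $(\Dist^1)^{\otimes k}(r^k)/(\Dist^0)^{\otimes k}(r^k)$. So the testing problem reduces to one observation of $R^k$ drawn from either $(\Dist^0)^{\otimes k}$ or $(\Dist^1)^{\otimes k}$. Applying Theorem~\ref{thm:single-sample-ceiling} (or its proof, with the product laws as transcript laws) and Corollary~\ref{cor:equal-priors} gives
\[
P_e^\star(k)=\sum_{r^k}\min\{\tfrac12(\Dist^0)^{\otimes k}(r^k),\tfrac12(\Dist^1)^{\otimes k}(r^k)\}=\tfrac12\bigl(1-\TV((\Dist^1)^{\otimes k},(\Dist^0)^{\otimes k})\bigr).
\]

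\emph{Upper bound on the error.} For every $s\in[0,1]$ we have $\min\{a,b\}\le a^sb^{1-s}$. Summing this and using that the product measure factorizes over coordinates,
\[
P_e^\star(k)\le\tfrac12\Bigl(\sum_r\Dist^0(r)^s\Dist^1(r)^{1-s}\Bigr)^k .
\]
Optimizing over $s$ then gives $\liminf_k -\frac1k\log P_e^\star(k)\ge C(\Dist^0,\Dist^1)$. Positivity follows from Hölder: $g(s)=\sum_r \Dist^0(r)^s\Dist^1(r)^{1-s}$ is log-convex with $g(0),g(1)\le1$, and if $\Dist^0\neq\Dist^1$ it satisfies $g(1/2)<1$ (Bhattacharyya), so $C>0$. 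This already yields $P_e^\star(k)\to0$ and success probability tending to one.

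\emph{Matching lower bound (the main obstacle).} The rate must not exceed $C$. I would take the standard route from \cite{CoverThomas2006}. Let $s^\ast$ attain the infimum of $g$ on $[0,1]$; it exists because $g$ is continuous and convex in $\log$. Define the tilted distribution $\Dist_{s^\ast}(r)\propto\Dist^0(r)^{s^\ast}\Dist^1(r)^{1-s^\ast}$. At the optimum, if $s^\ast$ is interior, then $D(\Dist_{s^\ast}\|\Dist^0)=D(\Dist_{s^\ast}\|\Dist^1)=C$.

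Under $\Dist_{s^\ast}^{\otimes k}$, consider the typical set where the empirical log-likelihood ratios of both coordinates are within $\epsilon$ of their means. On that set, each of $(\Dist^b)^{\otimes k}(r^k)$ is at least $e^{-k(C+\epsilon)}\Dist_{s^\ast}^{\otimes k}(r^k)$. Hence
\[
\sum\min\ge e^{-k(C+\epsilon)}\,\Dist_{s^\ast}^{\otimes k}(\text{typical}),
\]
and the probability of the typical set tends to $1$ by the weak law of large numbers.

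Two technical points need care. The first is when $s^\ast\in\{0,1\}$ or the supports differ (for example $C=\infty$ when the supports are disjoint, in which case the error is $0$). The second is countably infinite response alphabets, where finite log-likelihood moments are needed; I would truncate to a finite sub-alphabet and use continuity. I would state these caveats explicitly, or simply cite the Chernoff theorem of \cite{CoverThomas2006} for this direction.
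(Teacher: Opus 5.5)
Your proposal is correct and follows the paper's route. The exact formula comes from applying Theorem~\ref{thm:single-sample-ceiling} with equal priors (Corollary~\ref{cor:equal-priors}) to the product laws $(\Dist^b)^{\otimes k}$, and the exponent is the Chernoff theorem of \cite{CoverThomas2006}. The paper cites that theorem and simply asserts $C>0$; you additionally justify that adaptivity and other queries cannot help (via sufficiency of $R^k$), prove the achievability direction with $\min\{a,b\}\le a^s b^{1-s}$, prove positivity via the Bhattacharyya bound, and sketch the exponential-tilting converse. All of these steps are sound, subject to the boundary-case and countable-alphabet caveats you already flag.
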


\begin{proof}
Theorem~\ref{thm:single-sample-ceiling} gives the optimal success probability
for any pair of transcript distributions. Under cached responses, repeating a query returns the same response and therefore
does not produce a product distribution.
Under fresh responses, repeated calls to $q^\ast$ produce independent responses,
so after $k$ calls the state-dependent transcript distributions are
$(\Dist^0)^{\otimes k}$ and $(\Dist^1)^{\otimes k}$. Applying
Theorem~\ref{thm:single-sample-ceiling} to these fresh-response transcript
distributions with equal probabilities, as in Corollary~\ref{cor:equal-priors}, gives
the optimal success probability
\[
  \bar{\theta}_k
  =
  1-P_e^\star(k)
  =
  \frac{1+\TV\!\left((\Dist^1)^{\otimes k},(\Dist^0)^{\otimes k}\right)}{2}.
\]
Equivalently,
\[
  P_e^\star(k)
  =
  \frac12\left(
    1-\TV\!\left((\Dist^1)^{\otimes k},(\Dist^0)^{\otimes k}\right)
  \right).
\]
For i.i.d. samples from two distributions, the optimal error probability has
exponential decay rate equal to the Chernoff information~\cite{CoverThomas2006}.
Since $\Dist^0\neq\Dist^1$ implies $C(\Dist^0,\Dist^1)>0$, the error tends to
zero and $\bar{\theta}_k\to1$.
\end{proof}

\subsection{Fresh-Response Query Counts with an Incorporated Score Function}

Sections~\ref{sec:identification} and~\ref{sec:coverage} characterize the
identification and output quality ceilings under cached responses. The
corresponding fresh-response question is how many queries an SOTM needs to reach
a target quality level. The primary object in the bounds below is the query count
$N(x)$, the number of fresh oracle calls made on input $x$.
The quality condition over $x\sim\Dist_X$ is
\[
  \E_{x\sim\Dist_X}[S(x,\mathcal{M}(x))]\ge\theta,
\]
where the expectation is over $x\sim\Dist_X$, the oracle's response
distributions, and $M$'s internal randomness.

When the score function is incorporated into the SOTM, fresh responses can
convert the output quality ceiling into a stopping problem. Let $\phi$ be an
inter-turn computation as in Definition~\ref{def:single-response-value}.
Given a fresh response $r$, the SOTM computes $y=\phi(x,(q^\ast,r))$ and,
because the score function is incorporated, can evaluate whether $S(x,y)\ge\theta$.
For target quality~$\theta$,
define the one-response acceptable-output probability
\[
  p_\phi(x,\theta)
  =
  \Prb_{r}\bigl[S(x,\phi(x,(q^\ast,r)))\ge\theta\bigr].
\]
If $p_\phi(x,\theta)>0$, then repeated independent responses make it possible
for an SOTM with access to the score function to wait for an acceptable output.
Thus fresh responses do not make every output high scoring, but they can make an
acceptable output appear with probability approaching one whenever the
acceptable-response probability is positive. The result below makes this query
count explicit. For a $0/1$ score function and $\theta=1$,
threshold-acceptable means correct.

\begin{theorem}[Threshold stopping with an incorporated score function]
\label{thm:verifier}
Fix $\theta\in(0,1]$ and an inter-turn computation $\phi$ as in
Definition~\ref{def:single-response-value}, and suppose $p_\phi(x,\theta)>0$
for $\Dist_X$-almost every $x$. The SOTM that repeatedly queries the
fresh-response oracle on $x$ until it computes an output with score at
least~$\theta$ has $N(x)$ geometrically distributed with mean
$1/p_\phi(x,\theta)$. The expected query count, allowing the value $+\infty$,
is
\[
  \E_{x\sim\Dist_X}\!\left[N(x)\right]
  =
  \E_{x\sim\Dist_X}\!\left[\frac{1}{p_\phi(x,\theta)}\right].
\]
\end{theorem}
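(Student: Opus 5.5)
The proof has two parts: first fix $x$ and show that $N(x)$ is geometric, then average over $x\sim\Dist_X$. Fix $x$ with $p:=p_\phi(x,\theta)>0$. The stopping SOTM works in rounds. On round $i$ it calls $q^\ast$, receives $R_i$, computes $y_i=\phi(x,(q^\ast,R_i))$, and evaluates the incorporated score function. It halts with output $y_i$ if $S(x,y_i)\ge\theta$ and otherwise issues $q^\ast$ again. In the single-informative-query fresh-response setting, each call to $q^\ast$ returns an independent draw from the same response law, namely $\bar\Dist_x$ (or $\Dist^b$ conditional on $B=b$). The success indicators $I_i=\mathbf 1\{S(x,\phi(x,(q^\ast,R_i)))\ge\theta\}$ are therefore i.i.d.\ Bernoulli$(p)$. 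Since $N(x)=\min\{i:I_i=1\}$, we get $\Prb[N(x)=n]=(1-p)^{n-1}p$, so $N(x)$ is geometric with mean $1/p$, and it is finite with probability one. This also confirms that the SOTM halts almost surely, as the paper's standing assumption requires.

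The one point needing care is the i.i.d.\ claim. In general, the definition of a fresh-response oracle allows the response distribution to depend on the prior transcript. The setting of this subsection, as in Theorem~\ref{thm:fresh-response-identification}, stipulates that each fresh call to $q^\ast$ is an independent draw from a fixed law. If there is a hidden state $B$, I would condition on $B=b$. The indicators are then i.i.d.\ Bernoulli$(p_b)$, and if the $p_b$ differ the mixture is not geometric. I would therefore take the probability in $p_\phi$ to be under the actual per-call law, which is i.i.d.\ given $x$. I expect this justification, rather than any computation, to be the main obstacle, and I would resolve it by appealing to the subsection's assumption that $\phi$ is applied to independent fresh responses with a common distribution.

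For the averaging step, $N(x)$ depends on $x$, and $1/p_\phi(x,\theta)\in[1,\infty]$ is a nonnegative measurable function of $x$ (measurability is immediate, since $X$ is countable). By the tower property, $\E[N]=\E_x\bigl[\E[N(x)\mid x]\bigr]=\E_x[1/p_\phi(x,\theta)]$. This holds by the monotone convergence theorem (Tonelli), which is valid for nonnegative variables even when the value is $+\infty$. The exceptional null set where $p_\phi(x,\theta)=0$ contributes nothing. Finally, the output on every halting run has score at least $\theta$, so the quality condition $\E[S(x,\SOTM(x))]\ge\theta$ holds automatically. This is not part of the claim, but it justifies calling this a query count for the target $\theta$.
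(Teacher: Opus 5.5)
Your proposal is correct and follows the paper's proof. You build the threshold-stopping SOTM, note that under fresh responses the acceptance indicators are i.i.d.\ Bernoulli$(p_\phi(x,\theta))$, so $N(x)$ is geometric with mean $1/p_\phi(x,\theta)$, and average over $x\sim\Dist_X$. Your additional care makes explicit what the paper's one-line independence claim takes for granted: you use Tonelli for the possibly infinite average, and you observe that a hidden state persisting across calls would give a mixture of geometrics, so the per-call i.i.d.\ law must be assumed.
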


\begin{proof}
Construct the SOTM as follows. It repeatedly queries the fresh-response oracle.
After receiving response $r_i$, it computes the candidate output
$y_i=\phi(x,(q^\ast,r_i))$ and evaluates $S(x,y_i)$. If $S(x,y_i)\ge\theta$, it halts and
outputs $y_i$; otherwise, it issues another query. Under fresh responses, each
query-response turn produces a threshold-acceptable output independently with
probability $p_\phi(x,\theta)$. The first threshold-acceptable output therefore
has a geometrically distributed query count:
\[
  \Prb[N(x)=k]=(1-p_\phi(x,\theta))^{k-1}p_\phi(x,\theta),
  \qquad
  \E[N(x)]=\frac{1}{p_\phi(x,\theta)}.
\]
For the fixed input $x$, the expectation is over the fresh oracle responses.
Averaging this identity over $x\sim\Dist_X$ gives the displayed formula.
\end{proof}

When the quality target $\theta<1$, it may be necessary to stop without waiting
until a threshold-acceptable output appears: for example, a query budget,
computation time limit, or token budget may prevent indefinite fresh queries. We
therefore allow a \emph{query-budget cutoff}: after a prescribed number of fresh
oracle calls, the SOTM halts even if no threshold-acceptable output has appeared.

\begin{theorem}[Threshold stopping with a query-budget cutoff]
\label{thm:query-budget-cutoff}
Fix an input $x$ and let $p=p_\phi(x,\theta)>0$. If the threshold-stopping SOTM
is cut off after $m$ fresh oracle calls, then its success probability is
\[
  1-(1-p)^m.
\]
To guarantee success probability at least a target $\gamma\in(0,1)$ on this
input, the least sufficient cutoff is
\[
  m_\gamma(x)
  =
  \left\lceil
    \frac{\log(1-\gamma)}{\log(1-p_\phi(x,\theta))}
  \right\rceil .
\]
Under a cutoff at $m$ calls, the expected query count is
\[
  \E[N_m(x)]
  =
  \frac{1-(1-p)^m}{p},
\]
where $N_m(x)$ is the number of queries made before success or cutoff.
\end{theorem}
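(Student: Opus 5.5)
The plan is to reuse the per-turn independence structure from the proof of Theorem~\ref{thm:verifier}. Fix $x$ and write $p=p_\phi(x,\theta)>0$. Under fresh responses, the events $E_i=\{S(x,\phi(x,(q^\ast,R_i)))\ge\theta\}$ for successive calls are independent, each with probability $p$. The cutoff SOTM issues calls until the first $E_i$ occurs or until $m$ calls have been made. It succeeds exactly when some $E_i$ with $i\le m$ occurs. Failure is the event $\bigcap_{i=1}^m E_i^c$, whose probability is $(1-p)^m$ by independence, so the success probability is $1-(1-p)^m$.

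For the cutoff formula, I will solve $1-(1-p)^m\ge\gamma$ for the least integer $m$. This condition is equivalent to $(1-p)^m\le 1-\gamma$. Taking logarithms gives $m\log(1-p)\le\log(1-\gamma)$. When $p<1$, $\log(1-p)<0$, so dividing reverses the inequality to $m\ge \log(1-\gamma)/\log(1-p)$. The least integer satisfying this is the stated ceiling, since $(1-p)^m$ is decreasing in $m$ and the condition is therefore monotone. The degenerate case $p=1$ must be handled separately: there $\log(1-p)=-\infty$, the formula should be read as $\lceil 0\rceil$ and replaced by $m=1$, because one call already succeeds. I would state that convention explicitly. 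This boundary case is the only delicate step in the argument.

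For the expected query count, let $N_m=\min(G,m)$, where $G$ is the geometric variable from Theorem~\ref{thm:verifier}. I will use the tail-sum formula
\[
\E[N_m]=\sum_{k=1}^{m}\Prb[N_m\ge k]=\sum_{k=1}^{m}\Prb[G\ge k]=\sum_{k=1}^m(1-p)^{k-1},
\]
since $N_m\ge k$ holds exactly when the first $k-1$ calls all fail. Summing the geometric series gives $(1-(1-p)^m)/p$, as claimed. As a consistency check, letting $m\to\infty$ recovers the mean $1/p$ of Theorem~\ref{thm:verifier}.
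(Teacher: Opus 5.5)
Your proposal is correct and follows the paper's proof step for step: failure is the event that all $m$ independent calls miss, the cutoff comes from solving $(1-p)^m\le 1-\gamma$ by taking logarithms, and the expected count comes from the tail-sum formula applied to $N_m=\min\{N(x),m\}$. Your explicit handling of the boundary case $p=1$ is a small improvement, since the paper's step ``$\log(1-p)<0$'' tacitly assumes $p<1$.
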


\begin{proof}
For the fixed input $x$, each fresh oracle call produces a threshold-acceptable
output independently with probability $p=p_\phi(x,\theta)$. With a cutoff after
$m$ calls, success occurs unless all $m$ calls fail, so the success probability
is $1-(1-p)^m$. The condition $1-(1-p)^m\ge\gamma$ is equivalent to
\[
  m\ge \frac{\log(1-\gamma)}{\log(1-p)},
\]
because $\log(1-p)<0$, which gives the stated integer cutoff. The query count
under the cutoff is $N_m(x)=\min\{N(x),m\}$, so $N_m(x)$ is a positive
integer-valued random variable bounded by $m$. For any such random variable
$Z$,
\[
  Z=\sum_{j=1}^{m}\mathbf{1}\{Z\ge j\}.
\]
Taking expectations gives the tail-sum formula
$\E[Z]=\sum_{j=1}^{m}\Prb[Z\ge j]$. Applying this formula to $Z=N_m(x)$,
\[
  \E[N_m(x)]
  =
  \sum_{j=1}^{m}\Prb[N_m(x)\ge j]
  =
  \sum_{j=1}^{m}\Prb[N(x)\ge j]
  =
  \sum_{j=1}^{m}(1-p)^{j-1}
  =
  \frac{1-(1-p)^m}{p}.
\]
\end{proof}

\subsection{Fresh-Response Query Counts without the Score Function}

Without the score function, no SOTM can verify during its run that a computed
output is threshold-acceptable. The full output quality problem without the score
function may involve arbitrary computations from the response transcript, so we
isolate the simplest output quality setting in which fresh responses can still
help. We call this the \emph{binary candidate-output model}, referring to the two
possible responses $y^\ast$ and $\tilde{y}$. In this model, each
fresh oracle response is itself a candidate output: for a fixed input $x$, it
equals a threshold-acceptable output $y^\ast$ with probability $q(x)$ and a
fixed unacceptable alternative $\tilde y$ with probability $1-q(x)$,
independently across calls. The SOTM observes the response sequence but does not
have access to the score function, so it cannot verify during the computation
which candidate is acceptable. It can, however, compare received candidate
outputs for equality and count their frequencies. The theorem below treats the
favorable case $q(x)>1/2$, where the acceptable output is the more likely of the two responses.

Operationally, the SOTM does not check candidate outputs during the run. It
makes a fixed number $k$ of fresh oracle calls, records the candidate outputs it
receives, and returns the empirical majority candidate. This majority rule is
justified by the assumption $q(x)>1/2$: because the acceptable output
$y^\ast$ is more likely than the unacceptable alternative $\tilde y$,
repeated fresh responses make $y^\ast$ the majority candidate with high
probability. The score function or ground truth is used only externally, after
the run, to define and analyze whether the returned output is successful; it is
not available to the SOTM while it is computing.

\begin{theorem}[Fixed-sample majority without the score function]
\label{thm:no-verifier}
In the binary candidate-output model, fix $x$ with $q(x)>1/2$. Let
$k_\theta(x)$ be the least fixed number of fresh queries needed for
empirical-majority output to reach success probability at least $\theta$. Then,
to leading order as $\theta\to1$,
\[
  k_\theta(x)
  =
  \frac{\ln(1/(1-\theta))}{D(0.5\|q(x))}
  \cdot (1+o(1)).
\]
\end{theorem}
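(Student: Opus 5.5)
Fix $x$, write $q=q(x)>1/2$, and let $S_k$ be the number of the $k$ fresh responses equal to $y^\ast$. Then $S_k\sim\mathrm{Bin}(k,q)$, and the empirical-majority rule fails exactly when $S_k\le k/2$ (ties count as failures, or are broken by a fair coin; either choice only changes the error by a factor between $1/2$ and $1$, which does not affect the exponent). So the plan is to show that the majority error $P_{\rm err}(k)=\Prb[S_k\le k/2]$ decays at exact exponential rate $D(0.5\|q)$:
\[
  -\lim_{k\to\infty}\frac1k\ln P_{\rm err}(k)=D(0.5\|q).
\]
Once this holds, we invert it. The quantity $k_\theta(x)$ is the least $k$ with $P_{\rm err}(k)\le1-\theta$. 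Since $P_{\rm err}(k)=e^{-k(D(0.5\|q)+o(1))}$, solving $e^{-k(D+o(1))}\le 1-\theta$ gives $k_\theta=\ln(1/(1-\theta))/(D+o(1))$, and $k_\theta\to\infty$ as $\theta\to1$. This is the claimed statement.

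For the upper bound we will use the Chernoff--Hoeffding bound for the binomial lower tail:
\[
  \Prb[S_k\le ak]\le e^{-kD(a\|q)}\qquad\text{for } a<q.
\]
With $a=1/2$ this gives $P_{\rm err}(k)\le e^{-kD(0.5\|q)}$. Its proof is the standard exponential Markov argument, optimizing over $s<0$ in $\E[e^{sS_k}]=(1-q+qe^s)^k$. Consequently $k=\lceil \ln(1/(1-\theta))/D(0.5\|q)\rceil$ suffices, so
\[
  k_\theta(x)\le \frac{\ln(1/(1-\theta))}{D(0.5\|q)}+1.
\]

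For the lower bound we need a matching tail estimate:
\[
  P_{\rm err}(k)\ge\Prb[S_k=\lfloor k/2\rfloor]=\binom{k}{\lfloor k/2\rfloor}q^{\lfloor k/2\rfloor}(1-q)^{\lceil k/2\rceil}\ge \frac{1}{k+1}e^{-kD(0.5\|q)}\,c,
\]
where $c>0$ depends only on $q$ and absorbs the parity adjustment for odd $k$. This follows from the method-of-types bound $\binom{k}{j}\ge e^{kH(j/k)}/(k+1)$. The estimate gives $P_{\rm err}(k)\ge e^{-k(D+o(1))}$. Hence any $k$ achieving $P_{\rm err}(k)\le 1-\theta$ must satisfy
\[
  kD+\ln(k+1)+O(1)\ge\ln(1/(1-\theta)).
\]
Since $\ln(k+1)=o(k)$, this yields $k_\theta(x)\ge \frac{\ln(1/(1-\theta))}{D(0.5\|q)}(1-o(1))$.

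The main obstacle is the parity and tie-handling detail in this lower bound, together with a monotonicity issue. $P_{\rm err}(k)$ need not be monotone in $k$, because odd and even $k$ behave differently. For that reason we define $k_\theta$ through the condition itself rather than through monotonicity, and the two-sided exponential estimate valid for all large $k$ handles both parities uniformly. Alternatively, the two bounds can be obtained directly from Theorem~\ref{thm:fresh-response-identification}'s cited large-deviation fact (Cramér/Sanov) applied to the Bernoulli$(q)$ mean at threshold $1/2$.
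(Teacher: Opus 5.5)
Your proof is correct and follows the same route as the paper: the number of acceptable responses is $\mathrm{Binomial}(k,q(x))$, the majority error has exponent $D(0.5\|q(x))$, and you invert this rate to get $k_\theta(x)$. The difference is that you prove the two-sided tail estimate the paper only cites (Chernoff--Hoeffding for the upper bound, the atom at $\lfloor k/2\rfloor$ with the method-of-types bound for the lower bound), and your explicit sandwich on $k_\theta$ handles the non-monotonicity of the error in $k$ more carefully than the paper's ``set equal to $1-\theta$ and solve'' step.
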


\begin{proof}
With $k$ queries, the number of threshold-acceptable responses is
$B_k\sim\mathrm{Binomial}(k,q(x))$. Empirical-majority output is
threshold-acceptable exactly when $B_k>k/2$, up to a tie convention that affects
only lower-order terms. For
$q(x)>1/2$, the error probability is $\Prb[B_k\le k/2]$. The Chernoff large-deviation estimate for binomial
tails~\cite{CoverThomas2006} gives, for
$Z_k\sim\operatorname{Binomial}(k,q)$ and $a<q$,
\[
  \Prb[Z_k\le ka]
  =
  e^{-k\,D(a\|q)(1+o(1))}
  \qquad\text{as }k\to\infty,
\]
where $D(a\|q)$ is the Bernoulli KL divergence. Here
$a=1/2$, and the assumption $q(x)>1/2$ ensures $a<q(x)$, so the estimate gives
\[
  \Prb[B_k\le k/2]
  =
  e^{-k\,D(0.5\|q(x))\,(1+o(1))}.
\]
Setting $e^{-k\,D(0.5\|q(x))(1+o(1))}=1-\theta$ and solving for $k$
gives the displayed query count.
\end{proof}

\begin{remark}[When the acceptable output is less probable]
The condition $q(x)>1/2$ is essential for majority amplification in this binary
candidate-output model. If $q(x)<1/2$, the unacceptable response is more
probable, so empirical majority converges to the unacceptable candidate. If
$q(x)=1/2$, fresh responses carry no directional information about which
candidate is acceptable. Thus, without access to the score function or other
side information identifying the acceptable candidate, repeated fresh responses
cannot drive the success probability to one when $q(x)\le1/2$.
\end{remark}

A useful special case is when the success probabilities are constant across
inputs.

\begin{corollary}[Uniform success probabilities]
\label{cor:uniform}
If $p_\phi(x,\theta)=p$ for $x\sim\Dist_X$, then with the score function
incorporated the threshold-stopping query count is
\[
  \E_{x\sim\Dist_X}[N(x)]
  =
  \frac{1}{p}.
\]
In the binary candidate-output model without the score function with $q>1/2$, the
least fixed query count for empirical-majority output is
\[
  k_\theta
  =
  \frac{\ln(1/(1-\theta))}{D(0.5\|q)}
  \cdot(1+o(1)).
\]
\end{corollary}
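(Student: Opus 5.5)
The corollary follows by specializing Theorems~\ref{thm:verifier} and~\ref{thm:no-verifier} to constant success probabilities, so I will handle its two claims separately.

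For the first claim, the hypothesis $p_\phi(x,\theta)=p$ for $\Dist_X$-almost every $x$ should be read with $p>0$, since otherwise the expected query count is $+\infty$. This also gives the positivity hypothesis of Theorem~\ref{thm:verifier}. That theorem says that for each such $x$ the threshold-stopping SOTM has $N(x)$ geometric with mean $1/p_\phi(x,\theta)=1/p$. It then gives $\E_{x\sim\Dist_X}[N(x)]=\E_{x\sim\Dist_X}[1/p_\phi(x,\theta)]$. The integrand is the constant $1/p$, so the expectation equals $1/p$. No measurability issue arises because the function is constant almost surely.

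For the second claim, I read ``with $q>1/2$'' as $q(x)=q$ for every $x$ in the support of $\Dist_X$. This is the analogue of the first hypothesis for the competence in the binary candidate-output model. For each fixed $x$, the hypothesis $q(x)>1/2$ of Theorem~\ref{thm:no-verifier} holds, and the theorem gives
\[
  k_\theta(x)=\frac{\ln(1/(1-\theta))}{D(0.5\|q)}(1+o(1))
\]
as $\theta\to1$. The error probability $\Prb[B_k\le k/2]$ with $B_k\sim\mathrm{Binomial}(k,q)$ depends on $x$ only through $q(x)=q$. Hence the least $k$ achieving success probability at least $\theta$ is the same for every $x$. We can therefore define $k_\theta:=k_\theta(x)$, and it is also the least fixed count that works uniformly over inputs. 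Averaging over $x$ adds nothing, since the quality condition $\E_x[\text{success}]\ge\theta$ is met exactly when the common per-input success probability is at least $\theta$.

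The only step needing care is this last point, namely that a per-input asymptotic carries over to the task-level count. Because the $o(1)$ term depends only on $q$, it is uniform in $x$, so the per-input statement transfers directly. There is no real obstacle otherwise, as both parts are direct substitutions into the earlier theorems.
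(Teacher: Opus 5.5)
Your proposal is correct and matches the paper's proof, which simply substitutes $p_\phi(x,\theta)=p$ and $q(x)=q$ into Theorems~\ref{thm:verifier} and~\ref{thm:no-verifier}. Your additional remarks (reading the hypothesis with $p>0$, and noting that the per-input count depends on $x$ only through $q$, so it transfers to the task-level quality condition) are valid refinements that the paper leaves implicit.
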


\begin{proof}
Both formulas follow by setting $p_\phi(x,\theta)=p$ and $q(x)=q$ in
Theorems~\ref{thm:verifier} and~\ref{thm:no-verifier} respectively.
\end{proof}

The results in this section identify two ways fresh responses can help. When the
score function is incorporated into the SOTM, repeated oracle calls allow the
SOTM to keep querying until it computes a threshold-acceptable output and then
halts, possibly with a query-budget cutoff. When the score function is not
available during computation, the SOTM cannot recognize success as it occurs; in
the binary candidate-output model, repeated oracle calls instead help by making
the threshold-acceptable candidate appear as the empirical majority with high
probability.

\section{Conclusion}
\label{sec:conclusion}

This paper developed information-theoretic ceilings and query-count bounds for
computation with stochastic oracles. Under cached responses, repeated calls to
the same query do not provide new information. The resulting transcript therefore
imposes two distinct ceilings. The identification ceiling is governed by the
total variation distance between transcript distributions: when the distributions
induced by two hidden states overlap, some transcript realizations remain
compatible with both states, and no SOTM can distinguish them perfectly from the
transcript alone. The output quality ceiling is governed by the best expected
score obtainable from the transcript: even when the output need not
identify a hidden state, the responses may not contain enough information to
produce a high-quality output.

Fresh responses raise both ceilings by returning an independent response on each
call. For identification, repeated fresh calls reduce the identification error
exponentially at the Chernoff rate
(Theorem~\ref{thm:fresh-response-identification}). For output quality with an
incorporated score function, fresh responses turn the problem into threshold
stopping: the expected query count equals the reciprocal of the probability of
obtaining a threshold-acceptable output (Theorem~\ref{thm:verifier}), with a
query-budget cutoff available when indefinite querying is not feasible
(Theorem~\ref{thm:query-budget-cutoff}). Without access to the score function,
fresh responses can still help in special settings, such as the binary
candidate-output model, where majority voting identifies the more likely
acceptable candidate with high probability (Theorem~\ref{thm:no-verifier}). The
identification ceiling also extends to finite hidden-state spaces beyond the
binary case (Theorem~\ref{thm:multi-state-identification}), and the
single-informative output quality ceiling admits an explicit frontier
characterization via an acceptance threshold
(Corollary~\ref{cor:coverage-threshold}).

Several extensions remain open. First, the query-count bounds with an
incorporated score function assume that $S(x,y)$ is computed without error. In
practical LLM systems, score functions may be learned evaluators, automated
tests, or verifiers subject to false positives and false negatives. Extending
the query-count bounds to noisy score functions would require replacing
$p_\phi(x,\theta)$ by an effective acceptable-output probability that accounts
for evaluation error.

Second, the paper gives closed-form identification results for the
single-informative-query setting, while the general cached-response problem may
involve several informative queries. In that case, the relevant object is the
joint transcript distribution induced by the chosen adaptive query policy.
Characterizing how token cost trades off against the adaptive selection of
informative queries is a natural next step.

Third, Theorem~\ref{thm:no-verifier} treats the binary candidate-output model
without the score function. In that setting, the SOTM makes a fixed number of
fresh oracle calls and returns the candidate output that appears most often; the
analysis shows that this empirical majority is threshold-acceptable with high
probability when the acceptable candidate is more likely than the unacceptable
alternative. General tasks without an incorporated score function may have many
incorrect responses, semantic equivalence classes, or graded partial scores. The
right replacement for the Bernoulli KL divergence $D(0.5\|q)$ should involve an
information radius or a multi-hypothesis large-deviation exponent.

Finally, the analysis treats the oracle as a fixed family of response
distributions. Prompt changes, model updates, retrieval augmentation, and
correlated errors across inputs all alter the oracle's response distributions.
These changes do not
invalidate the results, but they change the oracle to which the results apply.
Understanding such drift is important for applying stochastic-oracle bounds to
deployed AI systems.

\bibliographystyle{plainnat}
\bibliography{reference}

\end{document}